\documentclass[10pt]{article}


\usepackage[margin=1in]{geometry}  
\usepackage{graphicx}              
\usepackage{amsmath}               
\usepackage{amsfonts}              
\usepackage{amsthm}
\usepackage{tikz}               
\usepackage{algorithm}
 \usepackage{algpseudocode,multirow,booktabs,caption,siunitx}
\usetikzlibrary{arrows,positioning}
\usetikzlibrary{chains,fit,shapes}
\usetikzlibrary{calc}
\usepackage{pgf}
\usepackage{xxcolor}
\usepackage{subfigure}
\usetikzlibrary{arrows,shadows,petri}


\newtheorem{thm}{Theorem}[section]
\newtheorem{lem}[thm]{Lemma}

\newtheorem{cor}[thm]{Corollary}

\newtheorem{observation}[thm]{Observation}


\DeclareMathAlphabet{\mathbmit}{OML}{cmm}{b}{it}
\renewcommand{\vec}[1]{\mathbmit{#1}}
\let\matr\vec

\begin{document}

\date{}
\nocite{*}

\title{The generalized vertex cover problem and some variations}

\author{
Pooja Pandey \thanks{ Corresponding author. Email: poojap@sfu.ca } and Abraham P. Punnen\\
Department of Mathematics, Simon Fraser University\\
 250 - 13450 – 102nd Avenue, Surrey, BC, V3T 0A3, Canada}

\maketitle

\begin{abstract}
In this paper  we study  the generalized vertex cover problem (GVC), which is a generalization of various well studied combinatorial optimization problems.  GVC is shown to be equivalent to the unconstrained binary quadratic programming problem  and also equivalent  to some other variations of the general GVC. Some solvable cases are identified and approximation algorithms are suggested for special cases. We also study GVC on bipartite graphs and identify some polynomially solvable cases. We show that GVC on bipartite graphs is equivalent to the bipartite unconstrained 0-1 quadratic programming problem. Integer programming formulations of GVC and related problems are presented and establish half-integrality property on some variables for the corresponding linear programming relaxations. We  also discuss special cases of GVC where all feasible solutions are independent sets or vertex covers. These problems are observed to be equivalent to the maximum weight independent set problem or minimum weight vertex cover problem along with some algorithmic results.
\end{abstract}



\section{Introduction}

Let $G=(V,E)$ be a  graph with $V=\{1,2,\hdots,n\} \text{ and } |E|=m$. For each edge $(i,j) \in E$ three real valued weights $q_{ij}^0, q_{ij}^1, \text{ and }  q_{ij}^2 $ are associated. Also, for each vertex $i \in V$ a weight $c_i$ is prescribed. For any subset $ U \subseteq V$, let $E_0(U) = \{(i,j) \in E: i,j \not\in U \}$, $E_1(U) = \{(i,j) \in E: \text{ either } i \in U \text{ or } j \in U \text{ but not both} \} $,  $E_2(U) = \{(i,j)\in E: i,j \in U \}$, and
$$f(U) = \sum_{i \in U}  c_i + \sum_{(i,j) \in E_0(U)}q_{ij}^0  + \sum_{ (i,j) \in E_1(U)}q_{ij}^1 + \sum_{(i,j) \in E_2(U)}  q_{ij}^2. $$
Then the {\it generalized vertex cover problem} (GVC) is to find a set $ U \subseteq V$ such that $f(U)$ is minimized. Note that $q_{ij}^0$ can be viewed as the `cost of' not covering edge $(i,j)$, $q_{ij}^1$ can be viewed as the `cost of' covering  edge $(i,j)$  by selecting exactly one of its end points, and $q_{ij}^2$ can be viewed as the `cost of' over-covering edge $(i,j)$. If the solution is an empty set $\phi$, then the objective function is defined as $f(\phi) = \sum_{(i,j) \in E}q_{ij}^0$. \\

GVC was introduced by Hassin and Levin \cite{2003Rafael} and it is a meaningful generalization of the well known  minimum weight vertex cover  problem (MWVCP) \cite{1982Dorit,2001Varizani} and the maximum weight independent set  problem (MWISP) \cite{1994Baker,2001Varizani}. Note that if $c_i\geq 0$ for all $i\in V,$ $q^0_{ij}=M$, a large number and  $q^1_{ij}=q^2_{ij}=0$ for all $(i,j)\in E$ then, GVC reduces to MWVCP. Likewise, when $c_i\leq 0$, $q^2_{ij}= M$, $q^1_{ij}=q^0_{ij}=0$ for all $(i,j)\in E$, then GVC is equivalent to MWISP. (It may be noted that MWISP is normally presented as a maximization problem which is equivalent to minimization form indicated above). Hassin and Levin \cite{2003Rafael} although introduced GVC,   they focused primarily on a  special case of it where $c_i$ is assumed to be non-negative for all $i\in V$ and $q^0_{ij} \geq q^1_{ij}\geq q^2_{ij} \geq 0$ for all $(i,j)\in E$. We denote this special case of GVC by GVC-HL and it may be noted that MWISP is not a special case of  GVC-HL. In \cite{2003Rafael} two 2-approximation algorithms for GVC-HL are proposed , one  based on linear programming, and the other based on the local-ratio technique \cite{1985Rauven}. When $q^1_{ij}= \alpha,~ ( 0 \leq \alpha \leq 1),~~  q^0_{ij}= 1, q^2_{ij}=0$ for all $(i,j)\in E$ and $c_i = \beta, \forall i \in V$, GVC is called \textit{uniform cost generalized vertex cover problem} (UGVC). In \cite{2003Rafael}  the complexity of UGVC has been studied  for all possible values of $\alpha $ and $ \beta$. They showed that GVC is polynomial time solvable in the following cases: 1)  $ \alpha \geq \frac{1}{2}$, 2) $ \alpha < \frac{1}{2} \text{ and } \beta \leq 3 \alpha$, 3)  $ \alpha < \frac{1}{2}$  and there exists an integer $ d \geq 3$ such that $d(1-\alpha) \leq \beta \leq (d+1)\alpha$. For the general case, UGVC is NP-hard \cite{2003Rafael}. Milanovic \cite{2010Marija} proposed a genetic algorithm to solve GVC-HL and reported  experimental results comparing their algorithm with CPLEX  and the 2-approximation  algorithm given in \cite{2003Rafael}.  Kochenberger et. al. \cite{2015Gary} compared an integer linear programming formulation  and an integer quadratic programming formulations using  the CPLEX solver.  \\

Another special case of GVC  where $q^1_{ij}=q^2_{ij}=0$ for all $(i,j)\in E$ was considered by Houchbaum \cite{2002Dorit} and Bar-Yehuda et al. \cite{2010Reuven}.  This problem is also known as the generalized vertex cover problem  in the literature and for definiteness we denote this problem by GVC1.  In fact, Houchbaum \cite{2002Dorit} and Bar-Yehuda et al. ~\cite{2010Reuven} studied primarily  a special case of GVC1 where $c_i\geq 0$ for all $i\in V$, $q^0_{ij} \geq 0$ for all $(i,j)\in E$.  We refer to this version of GVC1 as GVC1-HB. Houchbaum  \cite{2002Dorit} provided an integer linear programming formulation of GVC1-HB and showed that the corresponding linear programming relaxation admits half integrality property.  Bar-Yehuda et al. \cite{2010Reuven} provided an extension of a well known theorem by Nemhauser and Trotter \cite{1975Nemhauser} for the vertex cover problem  (independent set problem) to GVC1-HB, and presented a $(2-2/d)$-approximation algorithm  on graphs with maximum degree of a node  is bounded  above by $d$.  They also presented  a  polynomial time approximation scheme (PTAS) for GVC1-HB on  planner graphs, and a $(2-\log \log n/2 \log n)$- approximation algorithm for a general graph. In the same paper they showed  that GVC1-HB is NP-hard on complete graphs but solvable in polynomial time on   bipartite graphs. Note that MWVCP is trivial on complete graphs.\\

The generalized independent set problem introduced by  Hochbaum \cite{2002Dorit} is yet another special case of GVC. Here $q^1_{ij} \text{ and } q^0_{ij}  $  are assumed to be zero  for all $(i,j)\in E$, and we  denote this problem by GVC2.   Hochbaum and Pathria \cite{1997Dorit} studied a special case of GVC2 where  $c_i\leq 0$ for all $i\in V$, $q^2_{ij} \geq 0$ for all $(i,j)\in E$. This version of GVC2 is denoted by   GVC2-HP and the model have applications in Forest Harvesting. GVC2-HP in \cite{1997Dorit}  was presented as  a maximization problem with $c_i \geq 0, \forall i \in V$ and $q_{ij}^2 \leq 0, \forall (i,j) \in E$. Clearly,  this is equivalent to our definition of GVC2-HP. Kochenberger et al. \cite{2007Gary}  gave a nonlinear formulation for GVC2-HP and compare the computational effectiveness of this non-linear  integer programming formulation with an available linear integer programming formulation. \\

Another problem closely related to GVC is the unconstrained binary quadratic programming problem (UBQP)  studied by various authors \cite{2001Allemand, 2005Ferrez, 2010Kary,2014Gary, 1989padberg}.  Let $\matr{Q}=(q_{ij})$ be an $n \times n$ symmetric matrix and $\vec{a} = (a_1,\hdots, a_n)$ be an  $n$-vector. Then UBQP is to find  an $\vec{x}=(x_1,x_2,\hdots,x_n)  \in \{0,1\}^n$ such that
\begin{align*}
 &\sum_{i =1}^{n} a_ix_i + \sum_{i =1}^{n} \sum_{j =1}^{n} q_{ij} x_ix_j
\end{align*}

\noindent is minimized (or  maximized). Without loss of generality $q_{ii}$ is chosen as zero for $i=1,2,\hdots,n$ and we assume UBQP is presented as a minimization problem. Also the assumption that $\matr{Q}$ is symmetric is not a restriction, since $\matr{Q}$ can be replaced by $\dfrac{\matr{Q} + \matr{Q}^T}{2}$ to obtain  an equivalent problem.\\

Given the matrix $\matr{Q}$, let $E(\matr{Q})=\{(i,j) : q_{ij} \not= 0, i,j=1,\hdots,n \}$. The subgraph $G(\matr{Q})=(V,E(\matr{Q}))$ of the complete graph  $K_n$ with  the vertex set $V=\{1,2,\hdots,n\}$  is called the support graph of $\matr{Q}$. Then UBQP can be reformulated in terms of $G(\matr{Q})$ i.e; UBQP is equivalent to finding $U \subseteq V$ such that

\begin{align*}
&\sum_{i \in U} a_i + \sum_{ (i,j) \in E_2(\matr{Q},U)}  2q_{ij}
\end{align*}

\noindent is minimized (or  maximized), where $E_2(\matr{Q},U)= \{(i,j)\in E(\matr{Q}): i,j \in U \}$. \\

The bipartite unconstrained 0-1 quadratic program problem (BQP01)  \cite{2014Duarte, 2015Abraham,  2015Punnen1} is closely related to GVC on bipartite graphs.  Let $\matr{Q} = (q_{ij})$ be an  $m \times n$ matrix, $\vec{a} = (a_1, a_2, \hdots , a_m)$, $\vec{b} = (b_1, b_2,\hdots , b_n)$. Then BQP01 is defined as

\begin{align*}
 &\text{Minimize }  ~~  f(\vec{x},\vec{y}) =  \vec{x}^T \matr{Q} \vec{y} +\vec{a}\vec{x}+\vec{b}\vec{y}= \sum_{i=1}^{m} a_i x_i +  \sum_{j=1}^{n} b_j  y_j + \sum_{i=1}^{m} \sum_{j=1}^{n}q_{ij}x_iy_j \\
 &\text{ subject to: }  ~~\vec{x} \in \{0,1\}^m, \vec{y} \in \{0,1\}^n.
\end{align*}

Consider an instance of BQP01 with cost matrix $\matr{Q}$ of dimension $ m \times n$. The \textit{support  bipartite graph} of the matrix $\matr{Q}$ is the   bipartite graph $G[\matr{Q}]=(V_1,V_2,E[\matr{Q}])$, where $V_1=\{1,2,\hdots,m\}$, $V_2=\{1,2,\hdots,n\}$ and $E[\matr{Q}])=\{(i,j): i \in V_1, j \in V_2, q_{ij} \not= 0 \}$. The BQP01 can be  formulated as a graph theoretic optimization problem on $G[\matr{Q}])=(V_1,V_2,E[\matr{Q}]))$ as

\begin{align*}
\text{Minimized } ~ & \phi(U_1,U_2) = \sum_{i \in U_1} \sum_{j \in U_2} q_{ij} + \sum_{i \in U_1}   a_i + \sum_{j \in U_2}  b_j \\
\text{ subject to: }~ & U_1 \subseteq V_1, U_2 \subseteq V_2.
\end{align*}

It may be noted that the definition of support bipartite graph is different from that of a support graph. The support graph  when the  underlying graph is bipartite is different from a support bipartite graph.\\

In this paper  we study the general problems  GVC, GVC1, and GVC2.   We show that all these problems are equivalent to each other and also equivalent to UBQP. When the underlying graph is bipartite, these are equivalent to BQP01 as well.  However, it may be noted that although  these equivalences are verified in terms of optimality, domination ratio \cite{1997Fred}, differential approximation ratio \cite{1996Demange}, the  characteristics of these problems  in terms of approximation ratio \cite{2001Varizani} and various special cases are  different and hence  it is interesting to explore various special cases of these problems as well. We present several  complexity results related to GVC, GVC1, and GVC2 along with some polynomial solvable  special cases. Integer programming formulations of GVC, GVC1, and GVC2 are presented and establish half-integrality property on some variables for the corresponding linear programming relaxations. We  also present an approximation algorithm for  GVC with  the approximation ratio   $\max\{2,\alpha, \alpha \beta\}$  when  $ q^2_{ij} \leq \alpha q^1_{ij}$, $ q^1_{ij} \leq \beta q^0_{ij}$, and $\alpha, \beta \geq 1$ generalizing a result  given in  \cite{2003Rafael}. The approximation ratio for the  algorithm becomes $\max\{2,\gamma\}$  when for given $K \geq 0 \text{ and }  \gamma > 1$, and  for each edge $(i,j)$, all three weights $q_{ij}^0, q_{ij}^1, \text{ and }  q_{ij}^2 $ are in  $ [K,\gamma K]$.  When $q^{0}_{ij} = \infty$, the optimal solutions of GVC are vertex covers and this special case leads to the vertex cover problem with node and edge weights (VCPNEW). The problem VCPNEW and two of its special cases are shown to be equivalent to the minimum weight vertex cover problem (MWVCP) and shown that VCPNEW can be solved by an $\epsilon^{'}$ algorithm whenever  MWVCP can be solved by an $\epsilon$-algorithm for an appropriate  $\epsilon^{'} \leq \epsilon$ if $\sum_{(i,j) \in E} (2q_{ij}^1 - q_{ij}^2) \geq 0$ and $c_i + \sum_{(i,j) \in E} (q_{ij}^2 - q_{ij}^1) \geq 0$ for all $i\in V$.   When $q^{2}_{ij} = \infty$,  optimal solutions of GVC are independent sets and this special case leads to the independent set problem  with node and edge weights (ISPNEW). The problem ISPNEW and two of its special cases are shown to be equivalent to the maximum weight independent set problem (MWISP).\\

The paper is organized as follows: In Section 2, we study the complexity of GVC. Different NP-hard special cases as well as polynomially solvable special cases are identified. Further we show that GVC, GVC1, GVC2, and UBQP are pairwise equivalent in the sense that an algorithm to compute an optimal solution to one problem can be used to compute an optimal solution to the other. We also show that GVC, GVC1, and GVC2 on bipartite graphs are equivalent to BQP01.  Section 3 contains integer programming formulation of GVC and related problems, establish half-integrality property on some variables, and also discuss  the approximation algorithm for GVC. Section 4 discusses  VCPNEW, ISPNEW and some of their special cases. We also establish that  VCPNEW is equivalent to MWVCP, and ISPNEW is equivalent to MWISP.

\section{Complexity and Solvable cases}

We first show that the problems GVC, GVC1, GVC2, and UBQP are equivalent from an optimality point of view,  i.e.,   any one of these problems can be formulated as another in the sense  that from an optimal solution to one, an optimal solution  to another can be recovered.

\begin{thm} GVC, GVC1, and GVC2 are equivalent. \label{gen2}
\end{thm}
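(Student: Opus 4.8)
The plan is to exploit the fact that GVC1 and GVC2 are, by their definitions, special cases of GVC (each obtained by setting two of the three edge-weight families to zero). Hence the reductions GVC1~$\le$~GVC and GVC2~$\le$~GVC are immediate: every instance of GVC1 or GVC2 is already an instance of GVC. The substance of the theorem therefore lies in the two reverse reductions, namely transforming an arbitrary GVC instance into an equivalent GVC2 instance (only $q^2$ and vertex weights nonzero) and into an equivalent GVC1 instance (only $q^0$ and vertex weights nonzero). Once both are established, transitivity yields the pairwise equivalence of all three problems in the required sense, that an optimal solution to one can be recovered from an optimal solution to any other.

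First I would rewrite $f(U)$ using the binary indicators $x_i=1$ if $i\in U$ and $x_i=0$ otherwise. For a fixed edge $(i,j)$ exactly one of the three events ``$i,j\notin U$'', ``exactly one endpoint in $U$'', and ``$i,j\in U$'' occurs, and these are encoded respectively by $(1-x_i)(1-x_j)$, $x_i+x_j-2x_ix_j$, and $x_ix_j$. Thus the contribution of edge $(i,j)$ to $f(U)$ is
\begin{align*}
&q_{ij}^0(1-x_i)(1-x_j) + q_{ij}^1(x_i+x_j-2x_ix_j) + q_{ij}^2 x_ix_j \\
&\qquad = q_{ij}^0 + (q_{ij}^1-q_{ij}^0)(x_i+x_j) + (q_{ij}^0-2q_{ij}^1+q_{ij}^2)\,x_ix_j.
\end{align*}
For the reduction to GVC2 I would read off this identity directly: the quadratic coefficient $q_{ij}^0-2q_{ij}^1+q_{ij}^2$ becomes a new over-covering weight $\hat q_{ij}^2$, the linear terms are absorbed into modified vertex weights $\hat c_i = c_i + \sum_{j:(i,j)\in E}(q_{ij}^1-q_{ij}^0)$, and the edgewise constants collect into the solution-independent quantity $\sum_{(i,j)\in E} q_{ij}^0$. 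The resulting GVC2 objective then differs from $f(U)$ by this fixed constant for every $U$, so the two problems share their optimal solutions. For the reduction to GVC1 I would instead substitute $x_ix_j=(1-x_i)(1-x_j)-1+x_i+x_j$ into the same expression, obtaining the form $(2q_{ij}^1-q_{ij}^2)+(q_{ij}^2-q_{ij}^1)(x_i+x_j)+(q_{ij}^0-2q_{ij}^1+q_{ij}^2)(1-x_i)(1-x_j)$; setting $\tilde q_{ij}^0 = q_{ij}^0-2q_{ij}^1+q_{ij}^2$, folding the linear part into $\tilde c_i = c_i + \sum_{j:(i,j)\in E}(q_{ij}^2-q_{ij}^1)$, and collecting the constant again changes the objective by only a $U$-independent amount.

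The one genuinely delicate point, and the step I expect to be the main obstacle, is the treatment of the ``exactly one endpoint'' term $q^1$: unlike $q^0$ and $q^2$, it is not itself a product indicator, and it is precisely what prevents a naive per-edge constant shift (which can zero out only one of the three weights at a time) from completing the reduction. The identity $x_i+x_j-2x_ix_j$ for the symmetric-difference case is what lets the $q^1$ contribution be split cleanly into a quadratic piece (absorbed into $q^2$ or $q^0$) and a linear piece (absorbed into the vertex weights). I would close by noting that all the transformed weights are computable in linear time, and that since each transformation alters the objective by a single additive constant, the correspondence of minimizers is exact; combined with the trivial special-case inclusions and transitivity, this establishes the equivalence of GVC, GVC1, and GVC2.
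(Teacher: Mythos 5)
Your proposal is correct and follows essentially the same route as the paper: you arrive at exactly the same transformed instances (vertex weights $c_i+\sum_{(i,j)\in E}(q_{ij}^2-q_{ij}^1)$ with edge weights $q_{ij}^0-2q_{ij}^1+q_{ij}^2$ for GVC1, and $c_i+\sum_{(i,j)\in E}(q_{ij}^1-q_{ij}^0)$ with the same quadratic coefficient for GVC2) and the same solution-independent constants, so optimal solutions correspond exactly. The only difference is presentational — you carry out the bookkeeping via 0-1 indicator algebra, whereas the paper does it by decomposing sums over the edge sets $E_0(U)$, $E_1(U)$, $E_2(U)$ — but these are the same identities in different notation.
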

\begin{proof}  We first show that GVC can be formulated as GVC1. Given an instance of GVC, define

\begin{align}
    c_i^{'} = ~ & c_i + \sum_{(i,j) \in E}(q_{ij}^2-q_{ij}^1),~ \forall ~ i \in V  \text{ and }\\
     q_{ij}^{0^{'}}  =~ & q_{ij}^2-2q_{ij}^1+ q_{ij}^0,~ (i,j) \in E.
\end{align}

\noindent  Now, consider the instance of GVC1 where $c_i$ is replaced by  $c_i^{'}$ and  $q_{ij}^{0}$ is replaced by $q_{ij}^{0^{'}}$ . Then, for any $U \subseteq V$, the objective function of this GVC1  can be written as

 \begin{equation*}
  g(U)= \sum_{ i\in U} c_i^{'}  + \sum_{(i,j) \in E_0(U)}q_{ij}^{0^{'}} =   \sum_{ i\in U} \left(c_i+ \sum_{(i,j) \in E}(q_{ij}^2-q_{ij}^1) \right) + \sum_{(i,j) \in E_0(U)} ( q_{ij}^2-2q_{ij}^1+ q_{ij}^0).
 \end{equation*}

The objective function $f(U)$ of GVC is given by

\begin{align}
\label{GVC2UBQP} f(U) =& \sum_{i \in U}  c_i + \sum_{ (i,j) \in E_0(U)}q_{ij}^0 + \sum_{ (i,j) \in E_1(U)}q_{ij}^1 + \sum_{(i,j) \in E_2(U)}  q_{ij}^2.
\end{align}

It can be verified that
\begin{eqnarray}
\label{GVC2UBQP1}  \sum_{ (i,j) \in E_2(U)} q^{2}_{ij} &=&   \sum_{ (i,j) \in E, i\in U} q^{2}_{ij} + \sum_{ (i,j) \in E, j\in U} q^{2}_{ij} + \sum_{ (i,j) \in E_0(U)} q^{2}_{ij} - \sum_{ (i,j) \in E} q^{2}_{ij}  \text{ and }\\
 \label{GVC2UBQP2}  \sum_{ (i,j) \in E_1(U) } q^{1}_{ij} & =&  \sum_{ (i,j) \in E} 2q^{1}_{ij} -\sum_{ (i,j) \in E_0(U)} 2q^{1}_{ij} - \sum_{ (i,j) \in E, i\in U} q^{1}_{ij} - \sum_{ (i,j) \in E, j\in U} q^{1}_{ij}.
\end{eqnarray}

From (\ref{GVC2UBQP}), (\ref{GVC2UBQP1}) and (\ref{GVC2UBQP2}), we have

\begin{align*}
 f(U) =& \sum_{i \in U}  c_i + \sum_{ (i,j) \in E_0(U)}q_{ij}^0 +  \sum_{ (i,j) \in E} 2q^{1}_{ij} -\sum_{ (i,j) \in E_0(U)} 2q^{1}_{ij} - \sum_{ (i,j) \in E, i\in U} q^{1}_{ij} - \sum_{ (i,j) \in E, j\in U} q^{1}_{ij}\\
 &+ \sum_{ (i,j) \in E, i\in U} q^{2}_{ij} + \sum_{ (i,j) \in E, j\in U} q^{2}_{ij} + \sum_{ (i,j) \in E_0(U)} q^{2}_{ij} - \sum_{ (i,j) \in E} q^{2}_{ij},\\
\nonumber   = & \sum_{i \in U} \left(c_i+ \sum_{(i,j) \in E}(q_{ij}^2-q_{ij}^1) \right)  + \sum_{ (i,j) \in E_0(U)} ( q_{ij}^2-2q_{ij}^1+ q_{ij}^0)  +  \sum_{(i,j) \in E} (2q_{ij}^1- q_{ij}^2),  \\
\nonumber    = & g(U) + \sum_{(i,j) \in E} (2q_{ij}^1- q_{ij}^2).
\end{align*}

\noindent Thus, for any feasible solution $U \subseteq V$ of GVC and the instance of GVC1 constructed above, $f(U) - g(U) $ is a constant. Therefore, an optimal solution to this GVC1   will also be  an optimal solution to GVC.\\

Since GVC1 is a special case of GVC, any GVC1 can be formulated as GVC, establishing equivalence between GVC and GVC1.

To establish the equivalence between GVC and GVC2, define
\begin{align}
    c_i^{''} = ~& c_i+ \sum_{(i,j) \in E}(q_{ij}^1-q_{ij}^0),~ \forall ~ i \in V  \text{ and } \\
     q_{ij}^{2^{''}}  = ~ & q_{ij}^2-2q_{ij}^1+ q_{ij}^0,~ (i,j)  \in E.
\end{align}

\noindent Consider the instance of  GVC2 where $c_i$ is replaced by  $c_i^{''}$ and  $q_{ij}^{2}$ is replaced by $q_{ij}^{2^{''}}$. As in the previous case, we can show that an optimal solution to this GVC2 provides an optimal solution to GVC and the converse follows the fact that GVC2 is a special case of GVC.\\
\end{proof}

\begin{cor}GVC, GVC1, and GVC2 are NP-hard on complete graphs.
\end{cor}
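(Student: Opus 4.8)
The plan is to leverage the already-known hardness of a sub-case together with Theorem~\ref{gen2}, rather than building a fresh reduction. Recall from the introduction that Bar-Yehuda et al.~\cite{2010Reuven} established that GVC1-HB is NP-hard on complete graphs. Since GVC1-HB is obtained from GVC1 by merely restricting the sign pattern of the data ($c_i \ge 0$, $q^0_{ij} \ge 0$) on the \emph{same} underlying graph, every GVC1-HB instance on a complete graph is already a GVC1 instance on a complete graph; hence GVC1 is NP-hard on complete graphs. Likewise, GVC1 is the special case of GVC with $q^1_{ij}=q^2_{ij}=0$, again on the same graph, so the hardness transfers to GVC on complete graphs with no change to the vertex set or edge set.

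For GVC2, I would invoke the equivalence established in Theorem~\ref{gen2}. The reduction there from GVC to GVC2 replaces each $c_i$ by $c_i^{''}$ and each $q^2_{ij}$ by $q_{ij}^{2^{''}}$ while keeping $G=(V,E)$ fixed; it neither adds nor deletes vertices or edges. Consequently, a complete-graph instance of GVC is mapped to a complete-graph instance of GVC2, and since an optimal solution of the latter yields an optimal solution of the former in polynomial time, GVC2 inherits NP-hardness on complete graphs.

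The one point that genuinely needs checking---and the only place the argument could slip---is precisely this graph-preservation property: the special-case inclusions and the transformations of Theorem~\ref{gen2} must all operate on the \emph{same} graph rather than introducing auxiliary vertices or edges. A quick inspection of the definitions (GVC1 and GVC2 are GVC with certain weight families set to zero) and of the formulas defining $(c_i^{''}, q_{ij}^{2^{''}})$ in the proof of Theorem~\ref{gen2} confirms that $V$ and $E$ are left untouched, so completeness is preserved throughout. No calculation beyond this structural observation is required.
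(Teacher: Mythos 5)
You are correct, and your argument is essentially the paper's own: the paper likewise derives the corollary from Theorem~\ref{gen2} together with the NP-hardness of GVC1-HB on complete graphs established in \cite{2010Reuven} (the paper's proof writes ``GVC2-HB,'' an apparent typo for GVC1-HB). Your explicit check that the special-case inclusions and the reductions of Theorem~\ref{gen2} leave $V$ and $E$ untouched---so that completeness of the graph is preserved---is precisely the implicit point on which the paper's one-line proof rests.
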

\begin{proof} This follows from Theorem \ref{gen2} and the fact that GVC2-HB is NP-hard on complete graphs \cite{2010Reuven}.
\end{proof}

It may be noted that the special cases MWISP, MWVCP of GVC are trivial on a complete graph.

\begin{thm} GVC, GVC1, and GVC2 are equivalent to UBQP. \label{gen3}
\end{thm}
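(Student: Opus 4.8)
The plan is to exploit Theorem~\ref{gen2}: since GVC, GVC1 and GVC2 are already known to be equivalent, it suffices to establish the equivalence between a single one of them and UBQP, and I would choose GVC2 because its objective is structurally identical to the graph-theoretic reformulation of UBQP recorded in the introduction. Indeed, in GVC2 we have $q^0_{ij}=q^1_{ij}=0$, so $E_0(U)$ and $E_1(U)$ contribute nothing and the objective collapses to $f(U)=\sum_{i\in U}c_i+\sum_{(i,j)\in E_2(U)}q^2_{ij}$, which matches $\sum_{i\in U}a_i+\sum_{(i,j)\in E_2(\matr{Q},U)}2q_{ij}$ term for term once the weights are identified correctly.

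First I would show that any UBQP can be written as a GVC2. Given a symmetric matrix $\matr{Q}=(q_{ij})$ with zero diagonal and a vector $\vec{a}$, I take the support graph $G(\matr{Q})=(V,E(\matr{Q}))$ as the underlying graph of the GVC2 instance and set $c_i=a_i$ for all $i\in V$, $q^2_{ij}=2q_{ij}$ for every $(i,j)\in E(\matr{Q})$, and $q^0_{ij}=q^1_{ij}=0$. For any $U\subseteq V$, the set $U$ and its indicator vector $\vec{x}$ (with $x_i=1$ iff $i\in U$) satisfy $f(U)=\sum_{i\in U}a_i+\sum_{(i,j)\in E_2(\matr{Q},U)}2q_{ij}$, which is exactly the UBQP objective; hence the two problems take identical values on corresponding solutions, and an optimum of one yields an optimum of the other.

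Conversely, given a GVC2 instance on $G=(V,E)$ with weights $c_i$ and $q^2_{ij}$, I would define an $n\times n$ symmetric matrix $\matr{Q}$ by $q_{ij}=q_{ji}=\tfrac{1}{2}q^2_{ij}$ for $(i,j)\in E$ and $q_{ij}=0$ otherwise, with zero diagonal, and set $\vec{a}=\vec{c}$; identifying $U$ with its indicator vector again reverses the computation above. The only points requiring care are bookkeeping rather than genuine obstacles: the factor $2$ arises from the symmetry of $\matr{Q}$, since each unordered edge is counted as $q_{ij}x_ix_j+q_{ji}x_jx_i$; edges carrying a zero weight are harmless, as they drop out of the support graph and contribute nothing to either objective; and the empty-set convention $f(\phi)=\sum_{(i,j)\in E}q^0_{ij}=0$ for GVC2 is consistent with the UBQP value at $\vec{x}=\vec{0}$. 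I expect no substantive difficulty: once the correspondence $q^2_{ij}\leftrightarrow 2q_{ij}$ and $c_i\leftrightarrow a_i$ is fixed, the equivalence is a direct verification, and combining it with Theorem~\ref{gen2} delivers the equivalence of GVC and GVC1 with UBQP as well.
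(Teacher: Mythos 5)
Your proposal is correct and follows essentially the same route as the paper: both reduce to the GVC2--UBQP equivalence via Theorem~\ref{gen2}, using the identical correspondence $a_i \leftrightarrow c_i$ and $q^2_{ij} \leftrightarrow 2q_{ij}$ (with the factor $2$ accounting for the symmetry of $\matr{Q}$) on the support graph $G(\matr{Q})$. The verification that objective values coincide on corresponding solutions is exactly the paper's argument, so no further comparison is needed.
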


\begin{proof} We first show that  GVC2 can be formulated as UBQP.  For any feasible solution $U \subseteq \{1,2,\hdots,n\}$, let $ h(U)$ and $\phi(U)$  be the objective function values of  GVC2, and UBQP respectively.\\

Define the matrix $\matr{Q} = (q_{ij})_{n \times n}$ as

 \begin{equation}
q_{ij} =
\begin{cases}
\dfrac{q_{ij}^2}{2} ,  & \forall (i,j) \in E,\\
0  &\text{ if }  (i,j) \not\in E.
\end{cases} \label{QGVC2}
\end{equation}
and
\begin{equation}
a_i = c_i, ~ \forall i \in V. \label{aGVC2}
\end{equation}

\noindent  Consider  the UBQP with $\matr{Q}$ and $\vec{a}$  as defined above. For any feasible solution $ U \subseteq  \{1,2\hdots,n\}$ of GVC2 or UBQP, $h(U)=\phi(U)$. Thus an optimal solution to UBQP constructed above  is also an optimal solution to GVC2. \\

To show  that UBQP can be formulated as GVC2, let $\matr{Q}$ be the cost matrix of an instance  of UBQP and $G(\matr{Q})$ be the support graph of $\matr{Q}$. Choose $c_i = a_i, ~i=1,2, \hdots,n$, $q^2_{ij} = 2q_{ij},~ i,j=1,2,\hdots,n$ for  edge $(i,j)$ of the support graph $G(\matr{Q})$. Then, an optimal solution to the  resulting GVC2 on $G(\matr{Q})$ solves  UBQP. Thus GVC2 and UBQP are equivalent. The equivalence of GVC, GVC1, and UBQP now follow from Theorem \ref{gen2}.
\end{proof}

Although the equivalence of GVC, GVC1,  GVC2 and  UBQP are established in Theorem \ref{gen3}, the direct reformulation as a UBQP was given only for the GVC2. It is however interesting to present the precise structure of $\matr{Q}$ when GVC and GVC1 are reformulated as a UBQP. This is particularly useful in identifying polynomially solvable special cases directly from solvable UBQP instances. For any feasible solution $U \subseteq \{1,2,\hdots,n\}$, let $f(U), g(U), $ and $\phi(U)$  be the objective function values of GVC, GVC1, and UBQP respectively.\\

Define $\matr{Q} = (q_{ij})_{n \times n}$ as

 \begin{equation}
q_{ij} =
\begin{cases}
\dfrac{q_{ij}^2 - 2q_{ij}^1 + q_{ij}^0}{2},  & \forall (i,j) \in E,\\
0  &\text{ if }  (i,j) \not\in E. \label{QGVC}
\end{cases}
\end{equation}

and
\begin{equation}
a_i = c_i + \sum_{(i,j) \in E} \left( q_{ij}^1 - q_{ij}^0 \right) ~ \forall i\in V. \label{aGVC}
\end{equation}
Consider the UBQP with $\matr{Q}$ and $\vec{a}$ are as defined above. For any  $ U \subseteq  \{1,2\hdots,n\}$,\\

\begin{equation}
\label{GVCUBQP}   f(U) =  \sum_{i \in U} c_i + \sum_{ (i,j) \in E_0(U)} q^{0}_{ij} + \sum_{ (i,j) \in E_1(U) } q^{1}_{ij} + \sum_{ (i,j) \in E_2(U)} q^{2}_{ij}. \\
\end{equation}

It can be verified that
\begin{eqnarray}
\label{GVCUBQP1}  \sum_{ (i,j) \in E_0(U)} q^{0}_{ij} &=& \sum_{ (i,j) \in E} q^{0}_{ij} - \sum_{ (i,j) \in E, i\in U} q^{0}_{ij} - \sum_{ (i,j) \in E, j\in U} q^{0}_{ij} + \sum_{ (i,j) \in E_2(U)} q^{0}_{ij}  \text{ and }\\
 \label{GVCUBQP2}  \sum_{ (i,j) \in E_1(U) } q^{1}_{ij} & =&  \sum_{ (i,j) \in E, i\in U} q^{1}_{ij} + \sum_{ (i,j) \in E, j\in U} q^{1}_{ij} -  \sum_{ (i,j) \in E_2(U)} q^{1}_{ij}.
\end{eqnarray}

From (\ref{GVCUBQP}), (\ref{GVCUBQP1}) and (\ref{GVCUBQP2}), we have
 \begin{align}
\nonumber f(U) =&  \sum_{i \in U} c_i + \sum_{ (i,j) \in E} q^{0}_{ij} - \sum_{ (i,j) \in E, i\in U} q^{0}_{ij} - \sum_{ (i,j) \in E, j\in U} q^{0}_{ij} + \sum_{ (i,j) \in E_2(U)} q^{0}_{ij},  \\
\nonumber  & +  \sum_{ (i,j) \in E, i\in U} q^{1}_{ij} + \sum_{ (i,j) \in E, j\in U} q^{1}_{ij} -  \sum_{ (i,j) \in E_2(U)} q^{1}_{ij} + \sum_{ (i,j) \in E_2(U)} q^{2}_{ij}, \\
 \nonumber = & \sum_{i \in U} (c_i + \sum_{(i,j) \in E} \left( q_{ij}^1 - q_{ij}^0 \right)) + \sum_{ (i,j) \in E_2(U)} (q_{ij}^2 - 2q_{ij}^1 + q_{ij}^0) + \sum_{(i,j) \in E} q_{ij}^0, \\
 \nonumber = & \sum_{i \in U} a_i + \sum_{ (i,j) \in E_2(U)} 2q_{ij} +  \sum_{(i,j) \in E} q_{ij}^0, \\
 \label{GVCUBQP1.1}  =&  \phi(U) + \sum_{(i,j) \in E} q_{ij}^0.
\end{align}

 Thus an optimal solution to the UBQP constructed above is also an optimal solution to GVC.\\

Now, define the matrix $\matr{Q} = (q_{ij})_{n \times n}$ as

\begin{equation}
q_{ij} =
\begin{cases}
\dfrac{q_{ij}^0}{2} ,  & \forall (i,j) \in E,\\
0  &\text{ if }  (i,j) \not\in E.
\end{cases} \label{QGVC1}
\end{equation}

and
\begin{equation}
a_i = c_i- \sum_{(i,j) \in E}q_{ij}^0, ~ \forall i \in V. \label{aGVC1}
\end{equation}
 Consider an instance  of UBQP with $\matr{Q}$ and $\vec{a}$ as defined  above.\\

 Then
\begin{eqnarray*}
  g(U) &=& \sum_{i \in U} c_i + \sum_{ (i,j) \in E_0(U)} q^{0}_{ij}, \\
  &=& \sum_{i \in U} c_i + \sum_{ (i,j) \in E} q^{0}_{ij} - \sum_{ (i,j) \in E, i\in U} q^{0}_{ij} - \sum_{ (i,j) \in E, j\in U} q^{0}_{ij} + \sum_{ (i,j) \in E_2(U)} q^{0}_{ij}, \\
  &=&  \sum_{i \in U} (c_i- \sum_{(i,j) \in E}q_{ij}^0) +  \sum_{ (i,j) \in E_2(U)}q_{ij}^0 + \sum_{ (i,j) \in E} q^{0}_{ij},\\
   &=&  \phi(U) + \sum_{ (i,j) \in E} q^{0}_{ij}.
\end{eqnarray*}

 Thus an optimal solution to this UBQP is an optimal solution to GVC1.\\

  Let $\matr{Q}$ be the matrix defined as in equation (\ref{QGVC}) and $\vec{a}$ as in equation (\ref{aGVC}).

\begin{cor} \label{corGVC}
\begin{itemize}
\item[1.]  GVC is polynomial-time solvable on a  series-parallel graph .

\item[2.] If the rank of $\matr{Q}$ is fixed and $\matr{Q}$ is a positive semi-definite, then GVC is polynomial-time solvable.

\item[3.] If $\matr{Q}$ and $\vec{a}$ are non-negative, then GVC is polynomial-time solvable.

\item[4.] GVC is polynomial-time solvable if $q_{ij} \leq 0, ~ i \not= j$, where $\matr{Q}$ is defined in equation (\ref{QGVC}). 
\end{itemize}
\end{cor}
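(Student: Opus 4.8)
The plan is to read all four statements off the reduction of GVC to UBQP already recorded in equation (\ref{GVCUBQP1.1}). With $\matr{Q}$ as in (\ref{QGVC}) and $\vec{a}$ as in (\ref{aGVC}) we have $f(U) = \phi(U) + \sum_{(i,j)\in E} q_{ij}^0$ for every $U \subseteq V$, and since the trailing sum does not depend on $U$, minimizing $f$ is exactly equivalent to minimizing the UBQP objective $\phi$ built from this $\matr{Q}$ and $\vec{a}$. Hence each part reduces to certifying that the constructed instance lies in a polynomially solvable class of UBQP and then transporting the optimizer back through this constant-shift equivalence.

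For part 1, I would first note that the support graph $G(\matr{Q})$ is a subgraph of the underlying graph $G$, since an edge survives in $G(\matr{Q})$ only where the entry in (\ref{QGVC}) is nonzero; as the class of series-parallel graphs is closed under edge deletion, $G(\matr{Q})$ remains series-parallel whenever $G$ is. I would then invoke the fact that the degree-two pseudo-Boolean minimization $\phi(U) = \sum_{i\in U} a_i + \sum_{(i,j)\in E_2(\matr{Q},U)} 2q_{ij}$, whose interaction graph is exactly $G(\matr{Q})$, is solvable in polynomial time on graphs of bounded treewidth by dynamic programming over a tree decomposition, together with the fact that series-parallel graphs have treewidth at most two. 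For part 2, the matrix $\matr{Q}$ is literally the data of the UBQP instance, so when it is positive semidefinite of fixed rank the statement follows directly from the known polynomial-time algorithms for fixed-rank (semidefinite) UBQP \cite{2001Allemand, 2005Ferrez}, applied through the equivalence.

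Parts 3 and 4 are the two easy extremes. For part 3, if $\matr{Q}$ and $\vec{a}$ are non-negative then every term of $\phi$ is non-negative, so the minimum is attained at $U = \emptyset$ and is found in constant time; the corresponding GVC optimum, of value $\sum_{(i,j)\in E} q_{ij}^0$, is then read off from (\ref{GVCUBQP1.1}). For part 4, the hypothesis $q_{ij} \leq 0$ for $i \neq j$ means all quadratic coefficients of $\phi$ are non-positive, so $\phi$ is a submodular set function on $V$; such a function is minimized in polynomial time, most concretely via the classical reduction of submodular quadratic pseudo-Boolean minimization to a minimum $s$--$t$ cut.

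I expect the only genuine care to be needed in part 1: one must confirm both that deleting edges preserves the series-parallel property and that the tree-decomposition dynamic program correctly absorbs the linear terms $a_i$ alongside the quadratic interactions, so that no auxiliary apex vertex (which could raise the treewidth) is introduced. The remaining three parts are immediate once the constant-shift equivalence (\ref{GVCUBQP1.1}) is in hand.
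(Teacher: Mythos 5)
Your proposal is correct and follows the same overall route as the paper: both use the constant-shift identity (\ref{GVCUBQP1.1}) to transport the problem to the UBQP instance built from (\ref{QGVC}) and (\ref{aGVC}), note that the support graph of $\matr{Q}$ is a subgraph of $G$, and then appeal to polynomially solvable classes of UBQP. The differences are in how those classes are certified. For part 1 the paper simply cites the known result that UBQP is polynomial when the support graph is series-parallel \cite{Francisco1996, 2010duan}, whereas you re-derive it via treewidth at most two and dynamic programming over a tree decomposition; your worry about an ``apex vertex'' is unfounded here, since the linear terms $a_i$ sit on existing vertices and are absorbed locally by the DP, so the argument goes through. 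For part 3 the paper again cites \cite{2010duan}, while you observe directly that with $\matr{Q},\vec{a}\geq 0$ every term of $\phi$ is non-negative, so $U=\emptyset$ is optimal --- a more elementary and fully self-contained justification of the same fact. For part 4 the two arguments essentially coincide: the paper invokes the Picard--Ratliff reduction of UBQP with non-positive off-diagonal entries to minimum cut \cite{1973Picard, 1975Picard}, and your submodularity framing is just the standard way of explaining why that reduction works. In short, you buy self-containedness at the cost of slightly longer arguments; the paper buys brevity by citation. No gaps.
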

\begin{proof} From the construction given in equation (\ref{GVCUBQP1.1}), the support graph of $\matr{Q}$ is a subgraph of $G$ on which GVC is defined. Note that UBQP is solvable in polynomial time if the support graph of the cost matrix $\matr{Q}$ is series-parallel \cite{Francisco1996, 2010duan}. Since any subgraph of a series-parallel graph is series-parallel, the result follows.\\

If the rank of $\matr{Q}$ is fixed and $\matr{Q}$ is a positive semi-definite, then UBQP is polynomial time solvable \cite{2001Allemand, 2005Ferrez, 2010Kary,2010duan}, and if $\matr{Q}$ and $\vec{a}$ are non-negative, UBQP is polynomial time solvable \cite{2010duan}. The result now follows from Theorem \ref{gen3}.\\

When $q_{ij} \leq 0, ~ i \not= j$, where $\matr{Q}$ is defined in equation (\ref{QGVC}),  Picard et al.  \cite{1973Picard, 1975Picard} showed that UBQP  can be reduced to minimum-cut problem  on the graph $G=(V,E)$, where $V=(s,1,2,\hdots,n,t)$, $s$ denotes the source and $t$ the sink, and $E = E_s \cup E_Q \cup E_t$ with  $E_s= \{(s,j): j=1,2,\hdots,n \},~ E_Q= \{(i,j): q_{ij}  < 0, ~ 1 \leq i \leq j \leq n \},~E_t= \{ (j,t): j=1,2,\hdots,n\} $ \cite{2010duan}.  Therefore, in this case GVC is equivalent to the minimum-cut problem which is polynomial time solvable \cite{1973Picard, 1975Picard}. \\
\end{proof}

 Similar results follow  for GVC2 if $\matr{Q}$  is defined as in (\ref{QGVC2}) and $\vec{a}$ is defined as  in equation (\ref{aGVC2}) and for GVC1 when if $\matr{Q}$  is defined as in (\ref{QGVC1}) and $\vec{a}$ is defined as  in equation (\ref{aGVC1}).\\

It may be noted that the reductions discussed in Theorem \ref{gen2} and \ref{gen3} above for GVC and GVC1 to UBQP do not preserve $\epsilon$-optimality because of the resulting constant terms in the objective function. However, for GVC2, reduction to UBQP preserves  $\epsilon$-optimality. The equivalence however preserves other performance measures such as differential approximation ratio \cite{1980Ausiello, 1996Demange} and domination ratio \cite{1997Fred}.\\

For any graph $G=(V,E)$ and $v \in V$, $G-v$ is the graph obtained by deleting node $v$ and all its incident edges. Then GVC2 on $G$  can be solved by solving GVC2 on  $G-v$ with two different data sets.  Note that in an optimal solution to GVC2 on $G$, either $v$ belongs  or $v$ does not belongs to an optimal solution. If $v$ does not belong, then any optimal solution  to GVC2 on $G-v$ with original data (restricted to nodes and edges of $G-v$) is also an optimal solution to GVC2 on $G$. Otherwise, define

\begin{equation*}
c^{'}_i =
\begin{cases}
c_i + q_{iv}^2 ,  & \forall i \in \delta(v) ,\\
c_i  &\text{ otherwise }
\end{cases}
\end{equation*}

\noindent and $ q^{2^{'}}_{ij} = q^{2}_{ij}$ for all edge $(i,j)$ in $G-v$, where $\delta(v) = \{i : (i,v) \in  E \}$.

Let $U^{'}$ be an optimal solution to GVC2 on $G-v$ with $c_i$ replaced by $c^{'}_i$ and $q^{2}_{ij}$ is replaced by $q^{2^{'}}_{ij} $. Also let $U^0$ be an optimal solution to GVC2 on $G-v$ with original data. Then the best of $U^{'}, U^0$ will be an optimal solution to GVC2 on $G$. This idea  (together with Theorem \ref{gen2}) can be applied recursively to establish the following.

\begin{thm} GVC, GVC1, and GVC2 can be solved in polynomial time on $G$ if GVC2 can be solved in polynomial time  on $G- \{v_1,v_2,\hdots,v_k\}$ for some $\{v_1,v_2,\hdots,v_k\} \subseteq V$ where $k= O(\log n)$.\\
\end{thm}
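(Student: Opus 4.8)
The plan is to iterate the single-vertex reduction described in the paragraph immediately preceding the statement. That reduction shows that solving GVC2 on $G$ amounts to solving two instances of GVC2 on $G-v$: one with the original data, corresponding to $v \notin U$, and one in which each neighbor $i \in \delta(v)$ has its weight increased to $c_i + q^2_{iv}$, corresponding to $v \in U$ with the constant $c_v$ recorded separately. The best of the two returned sets, together with the recorded status of $v$, is optimal for GVC2 on $G$. I would apply this reduction successively to $v_1, v_2, \ldots, v_k$ and then count the work.

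First I would unfold the recursion. Deleting $v_1$ produces two instances on $G - v_1$; deleting $v_2$ from each produces four instances on $G - \{v_1, v_2\}$; after $k$ steps we obtain $2^k$ instances of GVC2 on the fixed graph $G' = G - \{v_1, \ldots, v_k\}$, each indexed by a subset $S \subseteq \{v_1, \ldots, v_k\}$ recording which deleted vertices are taken into $U$. Because the single-vertex weight adjustments are purely additive, they compose transparently: in the instance indexed by $S$, each surviving vertex $i$ of $G'$ carries weight $c_i + \sum_{v_\ell \in S,\, (i,v_\ell) \in E} q^2_{i v_\ell}$, the edge weights restricted to $G'$ are unchanged, and there is an additive constant $\sum_{v_\ell \in S} c_{v_\ell} + \sum_{(v_\ell, v_{\ell'}) \in E,\, v_\ell, v_{\ell'} \in S} q^2_{v_\ell v_{\ell'}}$ collecting the contribution of the edges with both endpoints among the taken deleted vertices.

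The key observation justifying this decomposition is that the objective of GVC2 depends only on the edges of $E_2(U)$, so once the membership of every $v_\ell$ is fixed, the contribution of each $v_\ell$ splits into a constant term, coming from $c_{v_\ell}$ and from edges among deleted taken vertices, plus a linear term that is absorbed into the weight of each surviving neighbor. Consequently the optimal value of GVC2 on $G$ is the minimum, over all $S$, of the additive constant for $S$ plus the optimal value of the corresponding GVC2 instance on $G'$, and an optimal $U$ is recovered by taking $S$ together with the optimal set returned for the best instance. Since GVC2 is solvable in polynomial time on $G'$ by hypothesis, and since $k = O(\log n)$ gives $2^k = n^{O(1)}$, the whole procedure makes a polynomial number of polynomial-time solver calls, with only polynomial overhead for adjusting weights and comparing the $2^k$ candidate solutions; hence GVC2 is solvable in polynomial time on $G$.

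Finally, I would invoke Theorem \ref{gen2}: GVC, GVC1, and GVC2 are equivalent, so a polynomial-time algorithm for GVC2 on $G$ yields polynomial-time algorithms for GVC and GVC1 on $G$ as well. The main point requiring care is the second paragraph, namely verifying that the single-vertex adjustments compose cleanly and that all constants, in particular the edges lying entirely inside the deleted set, are bookkept exactly once; the branching itself and the bound $2^k = \mathrm{poly}(n)$ are immediate from $k = O(\log n)$, so there is no conceptual obstacle beyond this routine verification.
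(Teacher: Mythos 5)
Your proposal is correct and follows essentially the same route as the paper: the paper's proof consists precisely of applying the single-vertex branching reduction recursively (together with Theorem \ref{gen2}), yielding $2^k = n^{O(1)}$ instances of GVC2 on $G-\{v_1,\ldots,v_k\}$, each solvable in polynomial time by hypothesis. Your explicit unfolding of the recursion, with the accumulated neighbor-weight adjustments and the additive constants for edges inside the deleted set, is just a more careful write-up of the bookkeeping the paper leaves implicit.
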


It is well known that if $U \subseteq V$ is an optimal vertex cover then $V-U$ is an optimal independent set. A similar relationship exist between GVC1 and GVC2.

\begin{lem} $U^{0} \subseteq  V$ is an optimal solution to GVC1 with data $c_i, ~ \forall i \in V$ and $q^{0}_{ij},~ \forall (i,j) \in E$ if and only if $V - U^{0}$ is an optimal solution to GVC2 with data $-c_i, ~ \forall i \in V$ and $q^{2}_{ij}=q^{0}_{ij},~ \forall (i,j) \in E$
\end{lem}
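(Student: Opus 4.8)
The plan is to reduce the claim to the single observation that, under complementation $U \mapsto V-U$, the objective of GVC1 on the given data and the objective of GVC2 on the transformed data agree up to an additive constant independent of $U$. Once that is in hand, optimality transfers in both directions at once, because complementation is a bijection of the power set of $V$ onto itself.

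First I would write the two objectives explicitly using the defining specializations. Since $q^1_{ij}=q^2_{ij}=0$ in GVC1, its objective on data $(c_i,\,q^0_{ij})$ is $g(U)=\sum_{i\in U}c_i+\sum_{(i,j)\in E_0(U)}q^0_{ij}$; and since $q^0_{ij}=q^1_{ij}=0$ in GVC2, its objective on data $(-c_i,\,q^2_{ij}=q^0_{ij})$, evaluated at a set $W\subseteq V$, is $h(W)=\sum_{i\in W}(-c_i)+\sum_{(i,j)\in E_2(W)}q^0_{ij}$.

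The only step with any content is the edge-set identity $E_0(U)=E_2(V-U)$: an edge has both ends outside $U$ exactly when both of its ends lie in the complement $W=V-U$. Pairing this with the elementary split $\sum_{i\in U}c_i=\sum_{i\in V}c_i+\sum_{i\in W}(-c_i)$ and setting $W=V-U$, I would obtain
\[
 g(U)=\sum_{i\in V}c_i+\sum_{i\in W}(-c_i)+\sum_{(i,j)\in E_2(W)}q^0_{ij}=\sum_{i\in V}c_i+h(V-U).
\]
Hence $g(U)-h(V-U)=\sum_{i\in V}c_i$ is the same constant for every $U\subseteq V$.

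To finish I would invoke this constant difference: $g(U^0)\le g(U)$ for all $U$ if and only if $h(V-U^0)\le h(V-U)$ for all $U$, and since $U\mapsto V-U$ ranges over all subsets of $V$, the latter inequalities say precisely that $V-U^0$ minimizes $h$. This delivers both implications of the equivalence simultaneously. I expect no genuine obstacle; the only things to watch are keeping the sign flip on $c_i$ and the edge-set identity mutually consistent, which is exactly the bookkeeping underlying the classical vertex-cover/independent-set complementation that this lemma generalizes.
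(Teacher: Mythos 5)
Your proposal is correct and is essentially the paper's own argument: both rest on the complementation identity $E_0(U)=E_2(V-U)$, the sign-flipped vertex-sum split, and the fact that $U\mapsto V-U$ is a bijection on subsets of $V$. The only difference is organizational—you isolate the constant-difference relation $g(U)=\sum_{i\in V}c_i+h(V-U)$ up front and get both directions at once, whereas the paper chains the same identities through the optimality inequality for one direction and declares the converse analogous.
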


\begin{proof}Suppose $U^{0} \subseteq  V$ is an optimal solution to the GVC1. Then for any $U \subseteq V$

\begin{align}
\sum_{i \in U^{0}} c_i + \sum_{(i,j) \in E_0(U^0)} q^{0}_{ij} & \leq \sum_{i \in U} c_i + \sum_{(i,j) \in E_0(U)} q^{0}_{ij}. \label{fes1}
\end{align}
Consider the objective function value of the GVC2 given in the statement of the lemma for the solution $V-U^0$. We have,

\begin{align*}
\sum_{i \in (V - U^{0})} -c_i + \sum_{(i,j) \in E_2(V-U^0)} q^{2}_{ij} & = \sum_{i \in V} -c_i - \sum_{i \in U^0} -c_i + \sum_{(i,j) \in E_0(U^0)} q^{2}_{ij},\\
&= \sum_{i \in V} -c_i + \sum_{i \in U^0} c_i + \sum_{(i,j) \in E_0(U^0)} q^{0}_{ij},\\
 & \leq \sum_{i \in V} -c_i + \sum_{i \in U} c_i + \sum_{(i,j) \in E_0(U)} q^{0}_{ij}, ~  \forall U \subseteq  V  \text{ from } (\ref{fes1}),\\
 & \leq \sum_{i \in (V-U)} -c_i  + \sum_{(i,j) \in E_2(V-U)} q^{2}_{ij}, ~  \forall U \subseteq  V. \\
\end{align*}

Since any subset of $V$ can be represented as $V-U$, for $ U \subseteq V$, it follows that  $V-U^0$ is an optimal solution to GVC2 defined above. The converse can be proved analogously.
\end{proof}

%

\subsection{Bipartite graphs}

Bipartite graph is a subclass of perfect graphs on which the MWVCP and MWISP can be solved in polynomial time \cite{2001Varizani}. Houchbaum  \cite{2002Dorit} and Bar-Yehuda et al. ~\cite{2010Reuven} showed that GVC1-HB  solvable in polynomial time on   bipartite graphs. Hochbaum and Pathria \cite{1997Dorit}  showed that GVC1-HP  is solvable in polynomial time on   bipartite graphs.  Thus it is interesting to investigate the complexity of GVC, GVC1, and GVC2 on bipartite graphs. Let us first define these problems on bipartite graphs using the bi-partitioned structure of the vertex set. \\

Let $G=(V_1,V_2,E)$ be a bipartite graph, where $V_1=\{1,2,\hdots,m\}$, $V_2=\{1,2,\hdots,n\}$ and $E=\{(i,j): i \in V_1, j \in V_2 \}$. For each vertex $i \in V_1$ a weight $c_i$ is prescribed and  for each vertex $j \in V_2$ a weight $d_j$ is prescribed. Let $ U_1  \subseteq V_1$,~ $ U_2  \subseteq  V2 $,  $E_0(U_1,U_2) = \{(i,j) \in E: i \not\in U_1,j \not\in U_2 \}$, $E_1(U_1,U_2) = \{(i,j) \in E: \text{ either } i \in U_1 \text{ or } j \in U_2 \text{ but not both }\} $, and  $E_2(U_1,U_2) = \{(i,j) \in E: i \in U_1,j \in U_2 \}$. Then GVC  on this bipartite graph $G$ is to find $U_1 \subseteq  V_1, U_2 \subseteq  V_2$ such that

$$  ~ f(U_1,U_2) = \sum_{i \in U_1}  c_i + \sum_{j \in U_2}  d_j   + \sum_{ (i,j) \in E_0(U_1,U_2)}q_{ij}^0 + \sum_{(i,j) \in E_1(U_1,U_2)}q_{ij}^1 + \sum_{(i,j) \in E_2(U_1,U_2)}  q_{ij}^2, $$

is minimized. Similarly,  GVC1  is to minimize
$$ ~ g(U_1,U_2) = \sum_{i \in U_1}  c_i + \sum_{j \in U_2}  d_j  +\sum_{ (i,j) \in E_0(U_1,U_2)}q_{ij}^0, $$ and GVC2 is to minimize
$$ ~ h(U_1,U_2) = \sum_{i \in U_1}  c_i + \sum_{j \in U_2}  d_j + \sum_{(i,j) \in E_2(U_1,U_2)} q_{ij}^2.  $$

Without loss of generality we assume $m \leq n$.

\begin{observation}GVC is solvable in polynomial time on bipartite graphs if   $q_{ij}^2-2q_{ij}^1+ q_{ij}^0 \geq 0,~  \forall (i,j)  \in E$ and  $~ c_i + d_j + \sum_{(i,j) \in E}(q_{ij}^2-q_{ij}^1) \geq 0,~ \forall ~ i \in V $ or $q_{ij}^2-2q_{ij}^1+ q_{ij}^0 \geq 0,~  \forall (i,j)  \in E$ and  $~ c_i+ d_j+ \sum_{(i,j) \in E}(q_{ij}^1-q_{ij}^0) \leq 0,~ \forall ~ i \in V $.
\end{observation}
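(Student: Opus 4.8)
The plan is to reduce each of the two cases to a problem already known to be polynomially solvable on bipartite graphs, using the weight transformations from the proof of Theorem \ref{gen2}. The only operation in those reductions is a relabeling of the vertex and edge weights; the underlying graph is never touched. Hence when GVC on a bipartite graph is rewritten as GVC1 or GVC2, the resulting instance lives on the same bipartite graph, so I may invoke directly the polynomial solvability of GVC1-HB and GVC2-HP on bipartite graphs recalled at the start of this subsection.

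For the first case I would reduce GVC to GVC1 exactly as in Theorem \ref{gen2}: set $q_{ij}^{0'} = q_{ij}^2 - 2q_{ij}^1 + q_{ij}^0$ on every edge and replace the node weights $c_i$ (for $i \in V_1$) and $d_j$ (for $j \in V_2$) by their shifted versions $c_i + \sum_{(i,j)\in E}(q_{ij}^2 - q_{ij}^1)$ and $d_j + \sum_{(i,j)\in E}(q_{ij}^2 - q_{ij}^1)$. Under the hypothesis $q_{ij}^2 - 2q_{ij}^1 + q_{ij}^0 \geq 0$ the transformed edge weights are nonnegative, and under $c_i + d_j + \sum_{(i,j)\in E}(q_{ij}^2 - q_{ij}^1) \geq 0$ the transformed node weights are nonnegative as well. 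The resulting instance is thus of type GVC1-HB, which is solvable in polynomial time on bipartite graphs, and by Theorem \ref{gen2} its optimal solution yields an optimal solution to the original GVC.

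For the second case I would instead reduce GVC to GVC2, again following Theorem \ref{gen2}: put $q_{ij}^{2''} = q_{ij}^2 - 2q_{ij}^1 + q_{ij}^0$ and shift each node weight by $\sum_{(i,j)\in E}(q_{ij}^1 - q_{ij}^0)$. The first hypothesis again forces the transformed edge weights to be nonnegative, while $c_i + d_j + \sum_{(i,j)\in E}(q_{ij}^1 - q_{ij}^0) \leq 0$ forces the transformed node weights to be nonpositive. This is precisely the GVC2-HP regime (nonpositive node weights, nonnegative edge weights), which is solvable in polynomial time on bipartite graphs, completing the argument.

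I expect the only genuine care to be needed in matching the sign pattern of the shifted node weights to the two stated inequalities, that is, in reading each condition as the assertion that every transformed node weight $c_i'$, $d_j'$ (respectively $c_i''$, $d_j''$) carries the sign demanded by GVC1-HB (respectively GVC2-HP). Everything else is the verbatim weight algebra of Theorem \ref{gen2} together with the obvious invariance of the bipartition under these weight-only reductions.
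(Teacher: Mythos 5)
Your proposal is correct and follows essentially the same route as the paper: the paper's proof likewise invokes the Theorem~\ref{gen2} transformations to reduce the first case to GVC1-HB and the second case to GVC2-HP, both known to be polynomially solvable on bipartite graphs. Your added remarks---that the reductions alter only weights (so bipartiteness is preserved) and that the stated inequalities are exactly the sign conditions on the transformed node weights---merely make explicit what the paper's terse proof leaves implicit.
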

\begin{proof}  If the first set of conditions are satisfied, then from Theorem \ref{gen2}, GVC reduces to GVC1-HB, which is solvable  in polynomial time on bipartite graphs \cite{2010Reuven}. If the second set of conditions  are satisfied, then GVC reduces to GVC2-HP, which is solvable in polynomial time on bipartite graphs \cite{1997Dorit}.
\end{proof}



\begin{thm} GVC, GVC1, and GVC2 on bipartite graphs are equivalent to BQP01. \label{gen4}
\end{thm}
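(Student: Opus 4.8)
### Proof Proposal

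The plan is to establish a two-way reduction between GVC on bipartite graphs and BQP01, mirroring the strategy used in Theorem \ref{gen3} for the general graph case. Since Theorem \ref{gen2} already shows that GVC, GVC1, and GVC2 are pairwise equivalent via the reductions in equations (1)--(6), and these reductions are purely local (they transform vertex and edge weights without altering the graph structure), they preserve bipartiteness. Hence it suffices to prove that just one of the three problems---say GVC2, whose objective involves only the $E_2$ term---is equivalent to BQP01 on bipartite graphs; the equivalence for GVC and GVC1 then follows immediately from Theorem \ref{gen2}.

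First I would show that GVC2 on a bipartite graph $G=(V_1,V_2,E)$ reduces to BQP01. Given an instance of GVC2 with vertex weights $c_i$ on $V_1$, $d_j$ on $V_2$, and edge weights $q_{ij}^2$, I would define the BQP01 data directly: set $a_i = c_i$ for $i \in V_1$, set $b_j = d_j$ for $j \in V_2$, and set the matrix entry $q_{ij} = q_{ij}^2$ for each $(i,j) \in E$ and $q_{ij}=0$ otherwise. Then for any $U_1 \subseteq V_1$, $U_2 \subseteq V_2$, identifying $U_1,U_2$ with the $0$-$1$ vectors $\vec{x},\vec{y}$ (so $x_i=1 \iff i \in U_1$ and $y_j=1 \iff j \in U_2$), the BQP01 objective $\phi(U_1,U_2) = \sum_{i \in U_1}\sum_{j \in U_2} q_{ij} + \sum_{i\in U_1} a_i + \sum_{j\in U_2} b_j$ equals exactly $h(U_1,U_2)$, because $\sum_{i\in U_1}\sum_{j\in U_2} q_{ij}$ collects precisely the edges with both endpoints selected, i.e. the edges in $E_2(U_1,U_2)$. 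Thus $h(U_1,U_2)=\phi(U_1,U_2)$ identically, and an optimal BQP01 solution solves GVC2.

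For the reverse direction, I would start from an arbitrary BQP01 instance with matrix $\matr{Q}$ of dimension $m \times n$ and vectors $\vec{a},\vec{b}$, and build its support bipartite graph $G[\matr{Q}]=(V_1,V_2,E[\matr{Q}])$. I would then define a GVC2 instance on $G[\matr{Q}]$ by setting $c_i = a_i$, $d_j = b_j$, and $q_{ij}^2 = q_{ij}$ for each edge $(i,j) \in E[\matr{Q}]$. By the same identity as above, $h(U_1,U_2)=\phi(U_1,U_2)$ for every pair of subsets, so an optimal GVC2 solution recovers an optimal BQP01 solution. This establishes the equivalence of GVC2 and BQP01 on bipartite graphs, and the statement for GVC and GVC1 follows by invoking Theorem \ref{gen2}.

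The one genuine subtlety---and the step I would be most careful about---is the bookkeeping of the constant terms introduced by the GVC-to-GVC2 reduction. The transformation in equations (5)--(6) shifts the vertex weights by $\sum_{(i,j)\in E}(q_{ij}^1 - q_{ij}^0)$ and introduces an additive constant in the objective (analogous to the constant $\sum_{(i,j)\in E} q_{ij}^0$ appearing in equation (\ref{GVCUBQP1.1})). I would verify that on a bipartite graph this shift is well-defined using the bipartitioned vertex weights $c_i$ (on $V_1$) and $d_j$ (on $V_2$) rather than a single weight vector, and confirm that the constant does not affect which solution is optimal. Since equivalence here is claimed only in the optimality sense (as emphasized in the paragraph following Theorem \ref{gen3}, these reductions need not preserve $\epsilon$-optimality because of precisely such constants), tracking the constant carefully suffices, and no harder obstacle arises.
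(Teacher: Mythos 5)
Your proposal is correct and follows essentially the same route as the paper: a direct two-way reduction between GVC2 and BQP01 (with $q_{ij}=q_{ij}^2$, $a_i=c_i$, $b_j=d_j$, and the support bipartite graph for the converse), followed by an appeal to Theorem \ref{gen2} to cover GVC and GVC1. Your explicit remarks that the Theorem \ref{gen2} reductions preserve bipartiteness and that the additive constants do not affect optimality are points the paper leaves implicit, but they do not change the argument.
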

\begin{proof}  For any feasible solution $U_1 \cup U_2$, where $U_1 \subseteq V_1$ and $U_2 \subseteq V_2$, let $h(U_1,U_2)$ and $\phi(U_1,U_2)$  be the objective function values of  GVC2, and BQP01 respectively.\\

Define the matrix $\matr{Q} = (q_{ij})_{m \times n}$ as

 \begin{equation}
q_{ij} =
\begin{cases}
q_{ij}^2 ,  & \forall (i,j) \in E,\\
0  &\text{ if }  (i,j) \not\in E.
\end{cases} \label{BQGVC2}
\end{equation}

\noindent   Set $a_i = c_i, ~ \forall i \in V_1$ and $b_j = d_j, ~ \forall j \in V_2$. Consider  the BQP01 with $\matr{Q}$, $\vec{a}$ and $\vec{b}$ are as defined above. For any feasible solution $ U_1 \cup U_2 $ of GVC2 (or BQP01), $h(U_1,U_2)=\phi(U_1,U_2)$. Thus an optimal solution to the BQP01 constructed above  is also an optimal solution to GVC2. \\

Let $\matr{Q}$ be the cost matrix of an instance  of BQP01 and $G(\matr{Q})$ be the support bipartite graph of $\matr{Q}$. Cost $a_i, ~i=1,2, \hdots,m$ is chosen as weight of vertex $i \in V_1$, cost $b_j, ~j=1,2, \hdots,n$ is chosen as weight of vertex $j$ for $j \in V_2$ and the quadratic cost $q_{ij},~ i=1,2,\hdots,m, j=1,2,\hdots,n$ is chosen as the cost of selecting both end points of edge $(i,j)$ of  $G(\matr{Q})$. Then, an optimal solution to the  resulting GVC2 on the bipartite graph $G(\matr{Q})$ solves  BQP01. The equivalence of BQP01 and GVC (GVC1) now follows from Theorem \ref{gen2}.
\end{proof}

As in the case of a general graph, to identify some interesting polynomially solvable special cases, we present below explicit re-formulations of GVC and GVC1 on the bipartite graph $G=(V_1 \cup V_2, E)$ as a BQP01. For any feasible solution $ U_1 \cup U_2$, let $f(U_1,U_2), g(U_1,U_2),$ and $\phi(U_1,U_2)$  be the objective function values of these  GVC, GVC1,  and BQP01 respectively.\\

Define $\matr{Q} = (q_{ij})_{m \times n}$ as

 \begin{equation}
q_{ij} =
\begin{cases}
q_{ij}^2 - 2q_{ij}^1 + q_{ij}^0,  & \forall (i,j) \in E,\\
0  &\text{ if }  (i,j) \not\in E.
\end{cases} \label{BQGVC}
\end{equation}
 $a_i = c_i + \sum_{(i,j) \in E} \left( q_{ij}^1 - q_{ij}^0 \right) ~ \forall i\in V_1$, and $b_j = d_j + \sum_{(i,j) \in E} \left( q_{ij}^1 - q_{ij}^0 \right) ~ \forall j\in V_2$.  Consider the BQP01 with $\matr{Q}$,  $\vec{a}$, and $\vec{b}$ are as defined above.  For any feasible solution $ U_1 \cup U_2 $\\

 \begin{equation}
\label{GVCBQP} f(U_1,U_2) =  \sum_{i \in U_1}  c_i + \sum_{j \in U_2}  d_j  + \sum_{ (i,j) \in E_0(U_1,U_2)}q_{ij}^0 + \sum_{(i,j) \in E_1(U_1,U_2)}q_{ij}^1 + \sum_{(i,j) \in E_2(U_1,U_2)}  q_{ij}^2.
\end{equation}

It is easy to show that
\begin{eqnarray}
\label{GVCBQP1}  \sum_{ (i,j) \in E_0(U_1,U_2)} q^{0}_{ij} &=& \sum_{ (i,j) \in E} q^{0}_{ij} - \sum_{ (i,j) \in E, i\in U_1} q^{0}_{ij} - \sum_{ (i,j) \in E, j\in U_2} q^{0}_{ij} + \sum_{ (i,j) \in E_2(U_1,U_2)} q^{0}_{ij}  \text{ and }\\
 \label{GVCBQP2}  \sum_{ (i,j) \in E_1(U_1,U_2)} q^{1}_{ij} & =&  \sum_{ (i,j) \in E, i\in U_1} q^{1}_{ij} + \sum_{ (i,j) \in E, j\in U_2} q^{1}_{ij} -  \sum_{ (i,j) \in E_2(U_1,U_2)} q^{1}_{ij}.
\end{eqnarray}

Now using equations (\ref{GVCBQP}), (\ref{GVCBQP1}) and (\ref{GVCBQP2}), we have

 \begin{align*}
f(U_1,U_2) =&   \sum_{i \in U_1} c_i + \sum_{j \in U_2} d_j + \sum_{ (i,j) \in E} q^{0}_{ij} - \sum_{ (i,j) \in E, i\in U_1} q^{0}_{ij} - \sum_{ (i,j) \in E, j\in U_2} q^{0}_{ij} + \sum_{ (i,j) \in E_2(U_1,U_2)} q^{0}_{ij}  \\
& +   \sum_{ (i,j) \in E, i\in U_1} q^{1}_{ij} + \sum_{ (i,j) \in E, j\in U_2} q^{1}_{ij} -  \sum_{ (i,j) \in E_2(U_1,U_2)} q^{1}_{ij} + \sum_{ (i,j) \in E_2(U_1,U_2)} q^{2}_{ij}, \\
=& \sum_{i \in U_1} (c_i + \sum_{(i,j) \in E} \left( q_{ij}^1 - q_{ij}^0 \right)) + \sum_{j \in U_2} (d_j + \sum_{(i,j) \in E} \left( q_{ij}^1 - q_{ij}^0 \right)) \\
& + \sum_{ (i,j) \in E_2(U_1,U_2)} (q_{ij}^2 - 2q_{ij}^1 + q_{ij}^0) + \sum_{(i,j) \in E} q_{ij}^0, \\
=& \sum_{i \in U_1} a_i + \sum_{j \in U_1} b_i+ \sum_{ (i,j) \in E_2(U_1,U_2)} q_{ij} +  \sum_{(i,j) \in E} q_{ij}^0, \\
   =&  \phi(U_1,U_2) + \sum_{(i,j) \in E} q_{ij}^0.
\end{align*}

 Thus an optimal solution to the BQP01 constructed above is also an optimal solution to GVC on   $G=(V_1 \cup V_2, E)$.\\

Now, define the matrix $\matr{Q} = (q_{ij})_{m \times n}$ as

\begin{equation}
q_{ij} =
\begin{cases}
q_{ij}^0 ,  & \forall (i,j) \in E,\\
0  &\text{ if }  (i,j) \not\in E.
\end{cases} \label{BQGVC1}
\end{equation}
 and $a_i = c_i- \sum_{(i,j) \in E}q_{ij}^0, ~ \forall i \in V_1$, $b_j = d_j- \sum_{(i,j) \in E}q_{ij}^0, ~ \forall j \in V_2$, Consider  the BQP01 with $\matr{Q}$, $\vec{a}$ and $\vec{b}$  as defined above. \\

 Then using equation (\ref{GVCBQP1}) we get
\begin{align*}
  g(U_1,U_2) =& \sum_{i \in U_1} c_i + \sum_{j \in U_2} d_i + \sum_{ (i,j) \in E_0(U_1,U_2)} q^{0}_{ij}, \\
  =& \sum_{i \in U_1} c_i + \sum_{j \in U_2} d_i +   \sum_{ (i,j) \in E} q^{0}_{ij} - \sum_{ (i,j) \in E, i\in U_1} q^{0}_{ij} - \sum_{ (i,j) \in E, j\in U_2} q^{0}_{ij} + \sum_{ (i,j) \in E_2(U_1,U_2)} q^{0}_{ij},\\
=& \sum_{i \in U_1} (c_i - \sum_{(i,j) \in E}   q_{ij}^0)  + \sum_{j \in U_2} (d_j - \sum_{(i,j) \in E} q_{ij}^0 ) + \sum_{ (i,j) \in E_2(U_1,U_2)}  q_{ij}^0 + \sum_{(i,j) \in E} q_{ij}^0, \\
=& \sum_{i \in U_1} a_i + \sum_{j \in U_1} b_i+ \sum_{ (i,j) \in E_2(U_1,U_2)} q_{ij} +  \sum_{(i,j) \in E} q_{ij}^0, \\
   =&  \phi(U_1,U_2) + \sum_{(i,j) \in E} q_{ij}^0.
\end{align*}

 Thus an optimal solution to this BQP01 is an optimal solution to GVC1 on   $G=(V_1 \cup V_2, E)$.\\

In view of Theorem \ref{gen4}, we have the following results for GVC  on bipartite graphs.

\begin{cor} \label{cor1GVCBQP} GVC on the bipartite graph $G=(V_1,V_2,E)$ is NP-hard if $m=O(\sqrt[k]{n})$, for any fixed $k$ but  solvable in polynomial time  if   $m=O(\log n)$.
\end{cor}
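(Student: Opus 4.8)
The plan is to transfer both claims from the corresponding facts about BQP01 using the equivalence of Theorem \ref{gen4}. The reductions in that theorem map an $m\times n$ instance of BQP01 to a GVC instance on a bipartite graph with $|V_1|=m$ and $|V_2|=n$, and conversely, so the relationship between the two side-sizes is preserved and any statement parametrized by $m$ relative to $n$ carries over intact. Thus it suffices to prove (i) BQP01 is solvable in polynomial time when $m=O(\log n)$, and (ii) BQP01 is NP-hard when $m=O(\sqrt[k]{n})$ for any fixed $k$.

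For (i), I would exploit the fact that once the vector $\vec{x}\in\{0,1\}^m$ is fixed, the objective $f(\vec{x},\vec{y})=\vec{a}\vec{x}+\sum_{j=1}^{n}\bigl(b_j+\sum_{i=1}^{m} q_{ij}x_i\bigr)y_j$ becomes linear in $\vec{y}$. Hence an optimal $\vec{y}$ is read off greedily by setting $y_j=1$ exactly when its coefficient $b_j+\sum_{i} q_{ij}x_i$ is negative, at a cost of $O(mn)$ per choice of $\vec{x}$. Since the smaller side $V_1$ satisfies $m=O(\log n)$ (recall the standing assumption $m\le n$), the number of candidate vectors is $2^{m}=2^{O(\log n)}=n^{O(1)}$, so enumerating them all solves BQP01 in polynomial time. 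Translating back through Theorem \ref{gen4} yields the polynomial-time claim for GVC.

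For (ii), I would begin from the NP-hardness of BQP01 in the balanced regime $m=n$ (as recorded in the BQP01 literature \cite{2014Duarte,2015Abraham,2015Punnen1}) and apply a padding argument. Given a hard instance carrying a $p\times p$ cost matrix, append $p^{k}-p$ dummy columns with $b_j\ge 0$ and $q_{ij}\ge 0$; in any optimal solution these columns take the value $y_j=0$, so the optimum is unchanged, while the enlarged instance now has $n=p^{k}$ columns and still only $m=p=n^{1/k}$ rows. This construction has size $O(p^{\,k+1})$, polynomial in $p$ for fixed $k$, so it is a valid polynomial reduction and NP-hardness persists throughout the class $m=O(\sqrt[k]{n})$. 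The conclusion for GVC then follows once more via Theorem \ref{gen4}.

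The mechanical parts—the greedy solution of the linear inner problem and the bookkeeping of the padded columns—are routine. The point that I expect to be the main obstacle is securing the base case: the padding argument is only meaningful if BQP01 is already NP-hard for \emph{square} matrices $m=n$, rather than merely in some unbalanced regime, so the entire NP-hardness claim rests on that square hardness result. If it is not available directly from the cited references, I would establish it by reducing a standard NP-hard quadratic problem (such as UBQP or max-cut) to a square BQP01 instance, and only then invoke the padding.
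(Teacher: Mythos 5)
Your proposal is correct, and its skeleton coincides with the paper's: both arguments funnel the corollary through the size-preserving equivalence of Theorem \ref{gen4} between GVC on $G=(V_1,V_2,E)$ and an $m\times n$ instance of BQP01. The difference is in what happens after the transfer. The paper's proof at that point is a one-line citation: the facts that BQP01 is NP-hard when $m=O(\sqrt[k]{n})$ and polynomially solvable when $m=O(\log n)$ are taken ready-made from \cite{2015Abraham}. You instead re-derive both facts: the easy regime by enumerating all $2^{m}=n^{O(1)}$ choices of $\vec{x}$ and solving the resulting linear problem in $\vec{y}$ greedily in $O(mn)$ time per choice, and the hard regime by padding a square NP-hard BQP01 instance with $p^{k}-p$ innocuous columns so that $m=p=n^{1/k}$, a polynomial-size construction for fixed $k$. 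Both arguments are sound (they are, in essence, the arguments behind the cited results), and you correctly isolated the one real dependency of the padding route, namely NP-hardness of \emph{square} BQP01; that base case does hold and can be secured exactly as you suggest, e.g.\ by reducing UBQP to BQP01 with penalty terms $M\sum_i(x_i+y_i-2x_iy_i)$ forcing $\vec{x}=\vec{y}$ at the optimum. What your route buys is a self-contained proof independent of the external reference; what the paper's route buys is brevity, at the cost of resting entirely on \cite{2015Abraham}. One small point worth making explicit in your write-up: for the hardness direction you also need the BQP01-to-GVC reduction to land inside the class of bipartite instances with $m=O(\sqrt[k]{n})$, which holds precisely because the support-bipartite-graph construction of Theorem \ref{gen4} preserves both part sizes (isolated vertices arising from zero columns included).
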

\begin{proof} From Theorem \ref{gen4}, GVC on the bipartite graph $G=(V_1,V_2,E)$ is equivalent to a BQP01 with size $ m \times n$ and the quadratic cost matrix  in BQP01 is as defined in equation (\ref{BQGVC}). Since BQP01 of size $m \times n$ is NP-hard if $m=O(\sqrt[k]{n})$ and solvable in polynomial time if  $m=O(\log n)$ \cite{{2015Abraham}}, the result follows.
\end{proof}

\begin{cor} \label{cor2GVCBQP}
If the matrix $\matr{Q} = (q_{ij})_{m \times n}$  is as defined by equation (\ref{BQGVC}) and has rank $O(\log n)$  or is a square matrix of bandwidth $d$, then GVC is solvable in polynomial time.
\end{cor}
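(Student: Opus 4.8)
The plan is to derive this directly from the equivalence established in Theorem \ref{gen4}, mirroring the argument used for Corollary \ref{cor1GVCBQP}. Recall that the explicit reformulation of GVC on the bipartite graph $G=(V_1,V_2,E)$ as a BQP01 constructed immediately above produces a BQP01 instance whose quadratic cost matrix is precisely the matrix $\matr{Q}=(q_{ij})_{m\times n}$ of equation (\ref{BQGVC}), together with the linear vectors $\vec{a}$ and $\vec{b}$; the derivation there shows that $f(U_1,U_2)$ and $\phi(U_1,U_2)$ differ only by the additive constant $\sum_{(i,j)\in E}q_{ij}^0$. Consequently an optimal solution to this BQP01 is an optimal solution to GVC, and the cost matrix of the BQP01 is exactly $\matr{Q}$, so no transformation alters its rank or its band structure.

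First I would invoke the known fact that BQP01 is solvable in polynomial time when its cost matrix has rank $O(\log n)$; this parallels the low-rank solvability exploited for UBQP in Corollary \ref{corGVC} and is available for BQP01 in the literature \cite{2014Duarte, 2015Abraham, 2015Punnen1}. Since the reformulation leaves $\matr{Q}$ intact, the hypothesis that the matrix in (\ref{BQGVC}) has rank $O(\log n)$ transfers verbatim to the BQP01 instance, and GVC inherits polynomial-time solvability by the additive-constant argument above.

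Next I would treat the bandwidth case. Here $\matr{Q}$ is assumed to be a square ($m=n$) matrix of bandwidth $d$, so $q_{ij}=0$ whenever $|i-j|>d$. I would cite the result that BQP01 with a bounded-bandwidth square cost matrix is polynomial-time solvable \cite{2015Abraham} and again use that the BQP01 produced by the reduction has $\matr{Q}$ as its cost matrix, so the bandwidth condition is preserved. The conclusion follows identically.

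The step I expect to carry the real weight is not the reduction, which is immediate once one observes that the BQP01 cost matrix coincides with $\matr{Q}$ of (\ref{BQGVC}) so that rank and bandwidth are preserved exactly, but rather the appeal to the underlying polynomial-time algorithms for low-rank and bounded-bandwidth BQP01. To make the statement fully rigorous one must quote these BQP01 solvability results precisely, with the correct dependence of the running time on $n$ and $d$, and confirm that they apply to the arbitrary-sign instances arising here, since the entries $q_{ij}^2-2q_{ij}^1+q_{ij}^0$ in (\ref{BQGVC}) may take either sign.
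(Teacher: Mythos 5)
Your proposal is correct and follows essentially the same route as the paper: the paper's proof is a one-line appeal to Theorem \ref{gen4} together with the known polynomial solvability of BQP01 instances with low-rank or bounded-bandwidth cost matrices \cite{2014Piyashat,2015Abraham}, which is exactly what you do, with the added (and sound) observation that the explicit reformulation preceding the corollary leaves $\matr{Q}$, and hence its rank and band structure, unchanged up to an additive constant in the objective.
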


\begin{proof}The proof follows from Theorem \ref{gen4} and corresponding properties of the equivalent BQP01 \cite{2014Piyashat,2015Abraham}.
\end{proof}

It may be noted we have a  result similar to Corollary \ref{cor2GVCBQP} on general graphs (see Corollary  \ref{corGVC}) with the additional restriction that $\matr{Q}$ must be positive-semidefinite. The bipartite structure  allows us to relax this restriction.\\

The results in corollary \ref{cor1GVCBQP} and \ref{cor2GVCBQP} have corresponding  versions of GVC1 and GVC2 as well. \\

\section{Integer programming formulations}

It may be noted that integer programming formulations  of GVC, GVC1, and GVC2 can  be obtained from the linearization of the equivalent UBQP \cite{1986Adams}. However,  direct formulations are more intuitive which is discussed below. \\

Consider the binary variables $x_i$ for  $i\in V$, $y_{ij}, z_{ij}$ for $(i,j) \in E$ where $x_i =1$ if and only if vertex $i$ is selected; $y_{ij} =1$ if and only if both vertices $i$ and $j$ are selected and $(i,j) \in E$; $z_{ij} =1$ if and only if none of the vertices $i$ and $j$ are selected and $(i,j) \in E$.
For any edge $(i,j) \in E$, either both $i$ and $j$ are  selected or none of $i$ and $j$ are selected or exactly one of $i$ or $j$ are selected and this later case can be handled by considering variable    $w_{ij} = 1-y_{ij}-z_{ij}$ for $(i,j) \in E$. We need not consider $w_{ij}$ explicitly and could simply use $1-y_{ij}-z_{ij}$ instead. We thus denote the objective function of GVC as $f(\vec{x},\vec{y},\vec{z})= \sum_{i\in V} c_ix_i + \sum_{(i,j) \in E} \left( q_{ij}^2 - q_{ij}^1\right) y_{ij}+ \sum_{(i,j) \in E} \left( q_{ij}^0 - q_{ij}^1\right)z_{ij} + \sum_{(i,j) \in E} q_{ij}^1$. It is possible to relax the integrality restriction on the variable $y_{ij}, \forall (i,j) \in E$. Thus we can formulate GVC as a mixed  integer linear program:

\begin{align}
\nonumber \text{ GVC-IP: ~~}  \text{ Minimized } &  ~ \sum_{i\in V} c_ix_i + \sum_{(i,j) \in E} \left( q_{ij}^2 - q_{ij}^1\right) y_{ij}+ \sum_{(i,j) \in E} \left( q_{ij}^0 - q_{ij}^1\right)z_{ij} \\
\nonumber \mbox{Subject to:~ } & x_i + x_j   \leq 1 + y_{ij}, ~ \forall  (i,j) \in E,\\
\nonumber & x_i    \geq y_{ij}, ~ \forall  (i,j) \in E,\\
 \nonumber &  x_j   \geq y_{ij}, ~ \forall  (i,j) \in E,\\
\label{GVC-1.4}   &   z_{ij} = 1-x_i-x_j+y_{ij} , ~\forall ~ (i,j) \in E,\\
\nonumber  &   y_{ij} \geq 0 , ~\forall ~ (i,j) \in E,\\
 \nonumber & x_{i}, \in \{0,1\} ~ \forall i \in V, z_{ij} \in \{0,1\} ~ \forall  (i,j) \in E.
\end{align}

Note that the constant  $\displaystyle \sum_{(i,j) \in E} q_{ij}^1$ is removed from the objective function.
Corresponding  Integer programming formulation of GVC1 and GVC2 are given below:

\begin{align*}
 \text{ GVC1-IP: ~~}  \text{Minimized } &  ~ \sum_{i \in V} c_i x_i + \sum_{(i,j) \in E}  q_{ij}^0 z_{ij} \\
 \mbox{Subject to:~ } & x_i + x_j   \geq 1 - z_{ij}, ~ \forall  (i,j) \in E,\\
 & z_{ij} \leq 1-x_i, ~ \forall  (i,j) \in E,\\
  &  z_{ij} \leq 1-x_j, ~ \forall  (i,j) \in E,\\
  & x_{i},  z_{ij} \in \{0,1\} ~ \forall i \in V,  (i,j) \in E.
\end{align*}

\begin{align*}
\nonumber \text{ GVC2-IP: ~~}  \text{Minimized} &  ~ \sum_{i \in V} c_i x_i + \sum_{(i,j) \in E}  q_{ij}^2 y_{ij} \\
 \mbox{Subject to:~ } & x_i + x_j   \leq 1 + y_{ij}, ~ \forall  (i,j) \in E,\\
 & y_{ij} \leq x_i, ~ \forall  (i,j) \in E,\\
  &  y_{ij} \leq x_j, ~ \forall  (i,j) \in E,\\
  & x_{i},  y_{ij} \in \{0,1\} ~ \forall i \in V,  (i,j) \in E.
\end{align*}

 \noindent For GVC1-IP when $c_i \geq 0, \forall i \in V$ and $q_{ij}^0 \geq 0, \forall (i,j) \in E$ then we can relax the constraints $z_{ij} \leq 1-x_i $ and $z_{ij} \leq 1-x_j, ~ \forall  (i,j) \in E$ and in this case the formulation GVC1-IP  reduces to that of GVC1-HB given in  \cite{2002Dorit}. Similarly in GVC2-IP, when $c_i \leq 0, \forall i \in V$ and $q_{ij}^2 \geq 0, \forall (i,j) \in E$, we can relax the constraints $y_{ij} \leq x_i, y_{ij} \leq x_j ~ \forall  (i,j) \in E$ and will get the formulation of GVC2-HP given in  \cite{1997Dorit}. \\

We denote  the linear programming relaxation  of GVC-IP, GVC1-IP, and GVC2-IP respectively by GVC-LP, GVC1-LP, and GVC2-LP.  \\

In GVC-IP, eliminating constraint (\ref{GVC-1.4}) by substituting $ y_{ij} = -1 + x_i + x_j + z_{ij} $  and we obtain an equivalent GVC1-IP. Similarly,  eliminating constraint (\ref{GVC-1.4}) by substituting  $z_{ij} = 1-x_i-x_j+y_{ij}$, we obtain an equivalent GVC2-IP. The LP relaxations of GVC-IP, equivalent GVC1-IP,  and equivalent GVC2-IP,  are denoted by GVC-LP(E), GVC1-LP(E), and GVC2-LP(E). (Note that the constant removed is taken into consideration when comparing the LP relaxation value).

 \begin{observation}  For the given instance of GVC-IP,  the LP relaxations GVC-LP(E), GVC1-LP(E) and  GVC-LP(E) have the same objective function value. \label{lem1}
 \end{observation}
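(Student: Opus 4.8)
The plan is to exploit the fact, recorded in the paragraph preceding the statement, that the three integer programs whose relaxations are compared --- GVC-IP, the equivalent GVC1-IP, and the equivalent GVC2-IP --- are obtained from one another by eliminating the single equality constraint (\ref{GVC-1.4}) through back-substitution. Because exactly one variable per edge is pinned by that equality, these substitutions are invertible affine changes of variables; I would argue that such a change identifies the three relaxed feasible regions together with their objectives, so that the three linear programs are really one polytope carrying one linear objective, merely written in three coordinate systems. An affine bijection of feasible regions under which the objective is preserved (after the constant bookkeeping the statement already stipulates) forces the optimal values to coincide.

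First, for a fixed edge $(i,j)$ I would record the two substitutions: $y_{ij}=-1+x_i+x_j+z_{ij}$, which eliminates $y_{ij}$ and passes to the equivalent GVC1-IP, and $z_{ij}=1-x_i-x_j+y_{ij}$ (constraint (\ref{GVC-1.4}) itself), which eliminates $z_{ij}$ and passes to the equivalent GVC2-IP. The crucial point is that relaxing integrality commutes with these substitutions. In GVC-IP the variable $y_{ij}$ is already continuous, so eliminating it at the LP level is identical to eliminating it at the IP level; and the relaxed box constraint $0\le z_{ij}\le 1$ transforms, under $z_{ij}=1-x_i-x_j+y_{ij}$, into exactly the pair of inequalities that the relaxation of the transformed integrality constraint of the equivalent GVC2-IP produces. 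Hence substituting into GVC-LP(E) reproduces GVC1-LP(E) and GVC2-LP(E) on the nose, and the projection deleting the pinned variable is a bijection of the three relaxed polytopes.

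Next I would track the objective. Substituting $y_{ij}=-1+x_i+x_j+z_{ij}$ into $\sum_{i}c_ix_i+\sum_{(i,j)}(q_{ij}^2-q_{ij}^1)y_{ij}+\sum_{(i,j)}(q_{ij}^0-q_{ij}^1)z_{ij}$ and collecting the coefficients of each $x_i$ and $z_{ij}$ returns the objective of GVC1-LP(E) up to a fixed additive constant (here $-\sum_{(i,j)\in E}(q_{ij}^2-q_{ij}^1)$); the substitution for GVC2 behaves analogously. Thus at corresponding feasible points the three objectives differ only by fixed constants. Since the statement explicitly records that the constant removed from the GVC objective is restored when the relaxation values are compared, each of the three optimal values equals the same shifted quantity.

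The only place that needs genuine care --- the main obstacle --- is confirming that the variable elimination is a bijection of the \emph{relaxed} feasible sets, not merely of their integer points, and that no inequality is silently lost or gained when the box constraints are rewritten; this is precisely the verification that relaxation commutes with the equality-based substitution, which I would carry out one constraint at a time. Granting that, a feasible point of GVC-LP(E) and its images in GVC1-LP(E) and GVC2-LP(E) are in exact correspondence with objective values agreeing up to the recorded constants, so the three relaxations attain the same optimum, which is the assertion of the observation.
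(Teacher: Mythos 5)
Your proposal is correct and follows essentially the same route the paper takes: the paper states this as an Observation with no separate proof, treating it as immediate from the preceding paragraph, which defines GVC1-LP(E) and GVC2-LP(E) precisely by the equality-based eliminations $y_{ij} = -1 + x_i + x_j + z_{ij}$ and $z_{ij} = 1 - x_i - x_j + y_{ij}$ that you carry out, with the removed constants restored when values are compared. Your write-up simply makes explicit the verification (affine bijection of the relaxed feasible regions, objectives matching up to the bookkept constants) that the paper leaves implicit.
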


 Nemhauser and Trotter \cite{1975Nemhauser} gave an integer programming formulation of  the Minimum weight vertex cover problem.  They also observed that extreme points of the LP relaxation of their integer programming formulation of MWVCP  are  half-integral.    Houchbaum \cite{2002Dorit} extended this result to  GVC1-HB when  vertex weight and edge weight are non-negative. Hassin and Levin  \cite{2003Rafael}  proved that the $\vec{x}$ component of   all extreme points of the LP relaxation of an integer programming  formulation of  GVC-HL  are half-integral ( i.e.  $x_i \in \{0,\frac{1}{2},1\}~ \forall i \in V$).   We  now prove a corresponding result for GVC-LP. The proof technique is similar to that is given in  \cite{2003Rafael}.

\begin{thm} For all  extreme points $(\vec{x},\vec{y},\vec{z})$ of  GVC-LP, $\vec{x}$ is half-integral.. \label{GVC-HL1}
\end{thm}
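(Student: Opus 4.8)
The statement asserts that at every extreme point $(\vec{x},\vec{y},\vec{z})$ of GVC-LP, the $\vec{x}$-component is half-integral, i.e., $x_i \in \{0,\tfrac12,1\}$ for all $i$. The standard approach for such half-integrality results (going back to Nemhauser--Trotter, and used in \cite{2003Rafael}) is a contradiction argument via perturbation: assume some extreme point has an $x_i \notin \{0,\tfrac12,1\}$, and construct two distinct feasible points whose average is the given point, contradicting extremality. First I would take a feasible point $(\vec{x},\vec{y},\vec{z})$ of GVC-LP and partition the vertex set according to the value of $x_i$. Define
\[
V_{+} = \{\, i \in V : x_i > \tfrac12 \,\}, \qquad V_{-} = \{\, i \in V : x_i < \tfrac12 \,\},
\]
and let $V_0 = \{\, i : x_i = \tfrac12 \,\}$. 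The perturbation will move the coordinates in $V_+ \cup V_-$ by a small $\epsilon$, pushing them toward or away from $\tfrac12$, while leaving $V_0$ fixed.

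\textbf{The perturbation.} For a small $\epsilon > 0$ I would define two candidate points $(\vec{x}^{+},\vec{y}^{+},\vec{z}^{+})$ and $(\vec{x}^{-},\vec{y}^{-},\vec{z}^{-})$ by setting
\[
x_i^{\pm} = x_i \pm \epsilon \ \text{ for } i \in V_+, \qquad
x_i^{\pm} = x_i \mp \epsilon \ \text{ for } i \in V_-, \qquad
x_i^{\pm} = x_i \ \text{ for } i \in V_0,
\]
so that $\vec{x} = \tfrac12(\vec{x}^{+} + \vec{x}^{-})$. The $\vec{y}$ and $\vec{z}$ coordinates must then be adjusted so that both perturbed points remain feasible and still average to the original. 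Because constraint (\ref{GVC-1.4}) forces $z_{ij} = 1 - x_i - x_j + y_{ij}$, once the $x$- and $y$-perturbations are chosen the $z$-perturbation is determined; I would set $y_{ij}^{\pm}$ to shift by the appropriate signed multiple of $\epsilon$ (governed by how $i$ and $j$ sit in $V_+$, $V_-$, or $V_0$) and carry the induced change through to $z_{ij}^{\pm}$. If the original point is a genuine extreme point and not half-integral in $\vec{x}$, then $V_+ \cup V_-$ is nonempty and the two perturbed points are distinct, contradicting extremality.

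\textbf{Where the work lies.} The main obstacle is checking that for small enough $\epsilon$ both perturbed points satisfy every constraint of GVC-LP, namely $x_i + x_j \leq 1 + y_{ij}$, the two inequalities $x_i \geq y_{ij}$ and $x_j \geq y_{ij}$, the bounds $0 \leq x_i \leq 1$ and $y_{ij} \geq 0$, and the nonnegativity of $z_{ij}$ inherited through (\ref{GVC-1.4}). The delicate case is an \emph{active} (tight) constraint at the original point: for a strict inequality, any sufficiently small $\epsilon$ preserves feasibility, but a tight constraint must be preserved \emph{exactly} by both perturbations, which is precisely what dictates the sign pattern of the $y$- and $z$-shifts in terms of the $V_+/V_-/V_0$ membership of the endpoints. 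I would verify that the chosen shifts keep each tight constraint tight (equivalently, that the perturbation direction lies in the face determined by the active constraints), so that the only way to avoid producing two distinct feasible points is to have had $V_+ \cup V_- = \emptyset$, i.e. $\vec{x}$ half-integral. A convenient bookkeeping device, following the approach referenced from \cite{2003Rafael}, is to treat an edge with both endpoints in $V_0$ separately, since there the $x$-coordinates do not move and one only needs the $y_{ij},z_{ij}$ perturbations to cancel in the average; the remaining edge types are then a finite case check of the signs.
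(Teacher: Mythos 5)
Your high-level strategy is the same as the paper's---a symmetric $\pm\epsilon$ perturbation producing two distinct feasible points whose midpoint is the given point---but your vertex partition is wrong in a way that invalidates the argument. You perturb every vertex in $V_+=\{i: x_i>\tfrac12\}$ and $V_-=\{i: x_i<\tfrac12\}$, and these sets contain the vertices with $x_i=1$ and $x_i=0$. For such a vertex the bound constraint $x_i\le 1$ (resp.\ $x_i\ge 0$) is tight, and your prescribed shift $x_i\pm\epsilon$ (resp.\ $x_i\mp\epsilon$) violates it in one of the two directions; no choice of $y$- and $z$-shifts can repair this, because the $x$-perturbation is already fixed by your definitions. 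As a result, the dichotomy you end with---``the only way to avoid producing two distinct feasible points is to have had $V_+\cup V_-=\emptyset$, i.e.\ $\vec{x}$ half-integral''---is false: $V_+\cup V_-=\emptyset$ means every $x_i=\tfrac12$, which is far stronger than half-integrality. Taken literally, your argument would show that every extreme point of GVC-LP has all $x$-coordinates equal to $\tfrac12$, which is absurd since integral extreme points exist. The paper avoids exactly this by perturbing only $E^-=\{i: 0<x_i<\tfrac12\}$ and $E^+=\{i: \tfrac12<x_i<1\}$ and leaving all vertices with $x_i\in\{0,\tfrac12,1\}$ fixed; with that choice, nonemptiness of $E^+\cup E^-$ is genuinely equivalent to $\vec{x}$ not being half-integral, and the perturbed points stay within the bounds for small $\epsilon$.

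A secondary gap: you never exhibit the $y$-shifts; you only assert they are ``the appropriate signed multiple of $\epsilon$'' dictated by the tight constraints, which is precisely where the work of the proof lies (for instance, an edge with $x_i+x_j=1$ and $y_{ij}=z_{ij}=0$ forces the shifts of $x_i$ and $x_j$ to cancel, which is exactly what the up-on-$E^-$/down-on-$E^+$ sign pattern guarantees). The paper makes a concrete choice, $y^k_{ij}=\min\{x^k_i,x^k_j\}$ and $z^k_{ij}=1-x^k_i-x^k_j+y^k_{ij}$ for $k=1,2$, and then checks feasibility and the midpoint identity. Your idea of shifting each $y_{ij}$ additively and symmetrically is in fact a reasonable (arguably cleaner) alternative, since the midpoint identity for $\vec{y}$ and $\vec{z}$ then holds automatically, whereas the paper's min-based choice only recovers the original $y_{ij}$ on average when $y_{ij}=\min\{x_i,x_j\}$; but an outline that stops at ``choose the appropriate shifts'' leaves the crucial case analysis undone, so even after repairing the partition the proposal is a plan rather than a proof.
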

\begin{proof} Take any feasible solution  $(\vec{x},\vec{y},\vec{z})$ of GVC-LP for which $\vec{x}$ is not half-integral. If we can prove that any feasible solution, for which $\vec{x}$ is not half-integral, can not be an extreme point of GVC-LP, then we are done.  \\

Define
\begin{align}
\label{vc10} E^{-} &= \{~i~|~  0 < x_i <  \frac{1}{2}\}\\
\label{vc11} E^{+} &= \{~i~|  ~\frac{1}{2} < x_i <  1\}
\end{align}

Since $\vec{x}$ is not half-integral, therefore for some of the nodes   we will have $x_i \in \{0,\frac{1}{2},1\}$ but not for all. Thus $ E^{+} \cup E^{-} $ is not empty. Now we will construct two new solutions $ (\vec{x}^1,\vec{y}^1,\vec{z}^1)  \text{ and } (\vec{x}^2,\vec{y}^2,\vec{z}^2)$ using  $(\vec{x},\vec{y},\vec{z})$  as follows:  define $\epsilon > 0$ and \\

\begin{equation*}
x_i^1 =
\begin{cases}
x_i + \epsilon &\text{ if  } i \in E^{-}\\
x_i - \epsilon &\text{ if  } i \in  E^{+}\\
x_i & \text{ otherwise  }
\end{cases}
\end{equation*}

\begin{equation*}
x_i^2 =
\begin{cases}
x_i - \epsilon &\text{ if  } i \in E^{-}\\
x_i + \epsilon &\text{ if  } i \in  E^{+}\\
x_i & \text{ otherwise  }
\end{cases}
\end{equation*}

Define $y_{ij}^1= \min\{x_i^1,x_j^1\},~y_{ij}^2=\min\{x_i^2,x_j^2\}$ and $z_{ij}^1= 1-x_i^1-x_j^1+y_{ij}^1,  \text{ and } z_{ij}^2= 1-x_i^2-x_j^2+y_{ij}^2$. It is easy to verify that $ (\vec{x}^1,\vec{y}^1,\vec{z}^1)  \text{ and } (\vec{x}^2,\vec{y}^2,\vec{z}^2)$ are feasible solutions to GVC-LP.

 We can see that $(\vec{x},\vec{y},\vec{z}) = \frac{1}{2} \left[ (\vec{x}^1,\vec{y}^1,\vec{z}^1)+   (\vec{x}^2,\vec{y}^2,\vec{z}^2) \right]$. Since $ E^{+} \cup E^{-} $ is not empty, means for  $ \epsilon > 0$ small enough, $ (\vec{x}^1,\vec{y}^1,\vec{z}^1)  \text{ and } (\vec{x}^2,\vec{y}^2,\vec{z}^2)$  are distinct feasible solutions. Since $(\vec{x},\vec{y},\vec{z})$ is a convex combination of two feasible solutions $ (\vec{x}^1,\vec{y}^1,\vec{z}^1)  \text{ and } (\vec{x}^2,\vec{y}^2,\vec{z}^2)$, therefore it can not be an extreme point. This completes the proof.

\end{proof}

The above Theorem  also indicates that for all extreme points of GVC1-LP  and GVC2-LP,  $\vec{x}$ is  half-integral.\\

Consider the rounding algorithm for GVC given below:

\begin{algorithm}[H]
\caption{The rounding algorithm for GVC}
\label{Ralgorithm}
\begin{algorithmic}[1]
\\Solve GVC-LP, and denote by $ (x, y,z)$ an optimal extreme point solution.
\\ Set $x_i^*=1$ if and only if $x_i \geq \frac{1}{2}$.
\\ For edge $(i,j)$, set $ y_{ij}^*=\min\{x_i^*,x_j^*\}$, and $z_{ij}^* = 1-x^{*}_i - x^{*}_i + y_{ij}^*$.
\\ Output $ (x^*, y^*,z^*)$ .
\end{algorithmic}
\end{algorithm}

For  $q_{ij}^0 \geq q_{ij}^1 \geq   q_{ij}^2  \geq 0, ~ \forall (i,j) \in E$,  a rounding based 2-approximation algorithm has been proposed in \cite{2003Rafael}.  The rounding algorithm does not provide 2-approximation for GVC if $q_{ij}^0,q_{ij}^1,\text{ and } q_{ij}^2,~ \forall (i,j) \in E $ are arbitrary. For example, consider the  instance of GVC as follows: Take a cycle of three vertices $1,2,3$. Let $c_1=c_2=c_3=1$, $q_{12}^2=2, q_{23}^2=3, q_{13}^2=4 \delta , \delta > 0, \text{ and } q_{12}^1=q_{23}^1=q_{13}^1=0,~ q_{12}^0=q_{23}^0=q_{13}^0=\infty$. Note that the property $q_{ij}^0 \geq q_{ij}^1 \geq   q_{ij}^2  \geq 0 ~ \forall (i,j) \in E$ does not hold.\\

\begin{figure}[H]
\begin{center}
\begin{tikzpicture}[-,>=stealth',shorten >=1pt,auto,node distance=3cm,
  thick,main node/.style={circle,draw,font=\sffamily\small\bfseries},xscale=0.5, yscale=0.5]

  \node[main node] (1) {1};
  \node[main node] (2) [below   of=1] {2};
  \node[main node] (3) [ right of=1] {3};

\draw[thin] (1)node[left]{}-- node[midway]{} (2)node[left]{};
  \draw[thin] (1)node[left]{$c_1 = 1 \quad$}-- node[midway]{} (3)node[right]{$\quad c_3 =1$};
  \draw[thin] (2)node[left]{$c_2 = 1 \quad$}-- node[midway, right]{} (3)node[right]{};

\end{tikzpicture}
\end{center}
\end{figure}
An optimal solution to this GVC  is,  $x_1=1, x_2 =1, x_3=0, y_{12}= 1, y_{23}=0, y_{13}=0,z_{12}= 0, z_{23}=0, z_{13}=0$, with the objective value 4. Solution to the LP relaxation of the corresponding  GVC-IP is $x_1=\frac{1}{2}, x_2=\frac{1}{2}, x_3=\frac{1}{2},  y_{12}=0, y_{23}=0, y_{13}=0,z_{12}= 0, z_{23}=0, z_{13}=0$ with the objective value  1.5. Solution after applying the rounding strategy given in  Algorithm \ref{Ralgorithm} is $x_1=1, x_2 =1, x_3=1, y_{12}= 1, y_{23}=1, y_{13}=1,z_{12}= 0, z_{23}=0, z_{13}=0$ with the objective value $8 + 4 \delta$, which is $2+ \delta$-approximate. This example shows that rounding algorithm  can not guarantee 2-approximate solution if  the condition $q_{ij}^0 \geq q_{ij}^1 \geq   q_{ij}^2  \geq 0 ~ \forall (i,j) \in E$ does not hold .\\

We now extend the results of \cite{2003Rafael} to a more general class of problems. Let $(x^o,y^o,z^o)$ is an optimal solution for GVC.

\begin{lem} The Rounding algorithm for GVC produces a $\max\{2,\alpha, \alpha \beta \}$-approximate solution if $ q^2_{ij}  \leq \alpha q^1_{ij},$ $~ q^1_{ij}  \leq \beta q^0_{ij}$, $~ \alpha \geq 1, \beta \geq 1$ and $ q^0_{ij}, q^1_{ij},$ and $  q^2_{ij}$ are non-negative. \label{r2}
\end{lem}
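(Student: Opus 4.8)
The plan is to charge the cost of the rounded solution to the optimal value of GVC-LP and then appeal to the fact that the LP is a relaxation. First I would check that the output $(\vec{x}^*,\vec{y}^*,\vec{z}^*)$ of Algorithm~\ref{Ralgorithm} is a genuine 0/1 solution of GVC: since $\vec{x}^*$ is integral, $y_{ij}^*=\min\{x_i^*,x_j^*\}=x_i^*x_j^*$ and $z_{ij}^*=(1-x_i^*)(1-x_j^*)$, so each edge contributes $q_{ij}^2$ when both endpoints are taken, $q_{ij}^0$ when neither is, and $q_{ij}^1$ otherwise. Write $\mathrm{ALG}$, $\mathrm{LP}$, $\mathrm{OPT}$ for the \emph{full} objective values (including the constant $\sum_{(i,j)\in E}q_{ij}^1$ removed in GVC-IP) of the rounded solution, of the optimal extreme point returned in Step~1, and of GVC. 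As GVC-LP relaxes GVC we have $\mathrm{LP}\le\mathrm{OPT}$, and since $\max\{2,\alpha,\alpha\beta\}>0$ it suffices to prove $\mathrm{ALG}\le\max\{2,\alpha,\alpha\beta\}\,\mathrm{LP}$, whereupon the factor carries over to $\mathrm{OPT}$.

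Next I would exploit separability. The terms $c_ix_i$ and the edge costs are independent, and for fixed $\vec{x}$ each edge contributes to GVC-LP only through its own $y_{ij}$, with $z_{ij}=1-x_i-x_j+y_{ij}$ and $\max\{0,x_i+x_j-1\}\le y_{ij}\le\min\{x_i,x_j\}$. Hence the optimal LP value decomposes as $\sum_{i\in V}c_ix_i+\sum_{(i,j)\in E}\ell_{ij}$, where $\ell_{ij}$ is the minimum of the full edge cost $q_{ij}^2y_{ij}+q_{ij}^0z_{ij}+q_{ij}^1(1-y_{ij}-z_{ij})$ over the feasible range of $y_{ij}$, while the rounded value is $\sum_{i\in V}c_ix_i^*+\sum_{(i,j)\in E}r_{ij}$ with $r_{ij}\in\{q_{ij}^0,q_{ij}^1,q_{ij}^2\}$. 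It is then enough to prove the local bounds $c_ix_i^*\le 2c_ix_i$ and $r_{ij}\le\max\{2,\alpha,\alpha\beta\}\,\ell_{ij}$, since summing them (using $c_i\ge0$ and $\ell_{ij}\ge0$) gives the desired $\mathrm{ALG}\le\max\{2,\alpha,\alpha\beta\}\,\mathrm{LP}$. The vertex bound is immediate from the half-integrality of $\vec{x}$ granted by Theorem~\ref{GVC-HL1}: values $x_i\in\{0,1\}$ are unchanged, while $x_i=\tfrac12$ rounds up to $1$ and exactly doubles $c_ix_i\ge0$.

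The heart of the argument is the edge inequality, which I would establish by the finite case analysis on $(x_i,x_j)\in\{0,\tfrac12,1\}^2$ that half-integrality makes available. For each pair I would evaluate $\ell_{ij}$ by minimising $q_{ij}^1+(q_{ij}^2+q_{ij}^0-2q_{ij}^1)\,y_{ij}$ over the feasible interval, so that the minimiser is the endpoint picked out by the sign of $q_{ij}^2+q_{ij}^0-2q_{ij}^1$, and then compare it with $r_{ij}$. When both ends round the same way, $r_{ij}$ and $\ell_{ij}$ either coincide or differ by a factor at most $2$, and the mixed $\tfrac12$-versus-$1$ pairs likewise cost only a factor $2$. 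The decisive configuration is $(\tfrac12,\tfrac12)$: both ends are rounded in, so $r_{ij}=q_{ij}^2$ while $\ell_{ij}=\min\{q_{ij}^1,\tfrac12(q_{ij}^0+q_{ij}^2)\}$, and the hypotheses enter here through the chain $q_{ij}^2\le\alpha q_{ij}^1\le\alpha\beta q_{ij}^0$. Charging $q_{ij}^2$ against $q_{ij}^1$ gives the factor $\alpha$, while relaying it to the base weight $q_{ij}^0$ gives the safe factor $\alpha\beta$; taking the worst case over all pairs together with the vertex bound yields $\mathrm{ALG}\le\max\{2,\alpha,\alpha\beta\}\,\mathrm{LP}\le\max\{2,\alpha,\alpha\beta\}\,\mathrm{OPT}$.

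I expect the main obstacle to be bookkeeping in the edge analysis rather than any conceptual difficulty: one must track the sign of $q_{ij}^2+q_{ij}^0-2q_{ij}^1$ to identify which endpoint realises $\ell_{ij}$, and one must dispatch degenerate edges where $\ell_{ij}=0$. Such edges are precisely where the hypotheses are indispensable, since $q_{ij}^2\le\alpha q_{ij}^1$ (and $q_{ij}^1\le\beta q_{ij}^0$) force $r_{ij}=0$ as well, so the vacuous ratio never obstructs the bound. A secondary point I would state explicitly is that the whole scheme presumes nonnegative data, in particular $c_i\ge0$; otherwise $\mathrm{OPT}$ may be negative and a multiplicative approximation guarantee is not well posed.
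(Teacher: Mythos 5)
Your proposal is correct, and it reaches the bound by a genuinely different route than the paper. The paper argues globally against the optimal LP extreme point $(\bar{x},\bar{y},\bar{z})$: it bounds the vertex terms by a factor of $2$, and for the edge terms it first inflates the coefficients $q^1_{ij}, q^0_{ij}$ to $\alpha q^1_{ij}, \alpha\beta q^0_{ij}$ and then uses the fact that passing from $(y^*_{ij},z^*_{ij})$ to $(\bar{y}_{ij},\bar{z}_{ij})$ shifts convex-combination weight from the smallest coefficient $q^2_{ij}$ toward the largest $\alpha\beta q^0_{ij}$ (via $\bar{y}_{ij}\le y^*_{ij}$, $\bar{z}_{ij}\ge z^*_{ij}$ and the chain $q^2_{ij}\le\alpha q^1_{ij}\le\alpha\beta q^0_{ij}$); no case enumeration is performed. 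You instead decompose the LP optimum edge by edge, defining $\ell_{ij}$ as the true per-edge minimum of the cost over the feasible interval $[\max\{0,\bar{x}_i+\bar{x}_j-1\},\min\{\bar{x}_i,\bar{x}_j\}]$, and verify the local inequality $r_{ij}\le\max\{2,\alpha,\alpha\beta\}\,\ell_{ij}$ by enumerating the half-integral pairs. Your route buys something real: the paper's argument rests on the assertion that $\bar{y}_{ij}=\min\{\bar{x}_i,\bar{x}_j\}$ at the LP optimum, which is unjustified (and can fail) whenever $q^0_{ij}+q^2_{ij}-2q^1_{ij}>0$ --- a situation the hypotheses permit, e.g.\ $q^0_{ij}$ much larger than $q^1_{ij}$ and $q^2_{ij}$ --- since the LP then pushes $y_{ij}$ to its lower bound rather than to $\min\{\bar{x}_i,\bar{x}_j\}$; your definition of $\ell_{ij}$ sidesteps this issue entirely, and your identification of exactly where each factor ($2$ on the mixed pairs, $\alpha$ or $\alpha\beta$ on the $(\frac{1}{2},\frac{1}{2})$ pairs) enters is sharper than the paper's single global inequality. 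Conversely, the paper's argument is shorter and avoids bookkeeping. Two small points on your write-up: the displayed edge cost $q^1_{ij}+(q^2_{ij}+q^0_{ij}-2q^1_{ij})y_{ij}$ is valid only when $\bar{x}_i+\bar{x}_j=1$ (in general the constant term is $q^1_{ij}(\bar{x}_i+\bar{x}_j)+q^0_{ij}(1-\bar{x}_i-\bar{x}_j)$, though the slope, hence the location of the minimizer and all of your stated conclusions, are unaffected); and your explicit flag that $c_i\ge 0$ is required is apt --- the paper uses this silently both in the rounding step and in merging the factor $2$ with the factor $\max\{\alpha,\alpha\beta\}$.
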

\begin{proof}

Let $(\bar{x},\bar{y},\bar{z})$ is an optimal solution for GVC-LP and $(x^o,y^o,z^o)$ is an optimal solution for GVC-IP. Then,

\begin{align}
 \label{r2.0}  \sum_{i=1}^n c_i\bar{x}_i +  \sum_{(i,j) \in E } \left(q_{ij}^2 \bar{y}_{ij} +  q_{ij}^1 (1-\bar{z}_{ij}- \bar{y}_{ij})+ q_{ij}^0 \bar{z}_{ij} \right) & \leq \sum_{i=1}^n c_ix_i^o +  \sum_{(i,j) \in E } \left(q_{ij}^2 y_{ij}^o +  q_{ij}^1 (1-z_{ij}^o-y_{ij}^o)+ q_{ij}^0 z_{ij}^o \right).
\end{align}

Let $(x^{*},y^{*},z^{*})$ be the solution obtained by the rounding scheme given in Algorithm \ref{Ralgorithm}.  Since each  $\bar{x}_i$ changed by a factor of 2 at most, we get

\begin{equation}
\label{r2.1} \sum_{i=1}^n c_ix_i^*  \leq  2\sum_{i=1}^n c_i \bar{x}_i ~~~ \text{ by rounding}.
\end{equation}

Due to rounding $ \bar{x}_i \leq x^{*}_i, ~ \forall i \in V$. Since $\bar{y}_{ij}= \min \{\bar{x}_i, \bar{x}_j \} $ and $ y^{*}_{ij} =\min \{x^{*}_i,x^{*}_j \} $, therefore, $\bar{y}_{ij} \leq y^{*}_{ij}$. And $ z^{*}_{ij} = 1 - x^{*}_i  - x^{*}_j   + y^{*}_{ij}$, $~ \bar{z}_{ij} = 1 - \bar{x}_i   -  \bar{x}_j  +  \bar{y}_{ij} $, therefore, $\bar{z}_{ij} \geq z^{*}_{ij}$.\\

Now we can claim that
\begin{align}
\nonumber  q^2_{ij} y^{*}_{ij} + q^1_{ij} (1 -  y^{*}_{ij} - z^{*}_{ij} ) + q^0_{ij} z^{*}_{ij} &
  \leq  q^2_{ij} y^{*}_{ij} + \alpha q^1_{ij} (1 -  y^{*}_{ij} - z^{*}_{ij} ) +  \alpha \beta q^0_{ij} z^{*}_{ij}, \text{ since } \alpha, \beta \geq 1 \\
\nonumber & \leq  q^2_{ij} \bar{y}_{ij} + \alpha q^1_{ij} (1 -  \bar{y}_{ij} - \bar{z}_{ij} ) +  \alpha \beta q^0_{ij} \bar{z}_{ij}\\
\label{r2.3} & \leq \max \{ \alpha, \alpha \beta \} \left( q^2_{ij} \bar{y}_{ij} + q^1_{ij} (1 -  \bar{y}_{ij} - \bar{z}_{ij} ) + q^0_{ij} \bar{z}_{ij} \right).
\end{align}

The second inequality holds, since  $\bar{y}_{ij} \leq y^{*}_{ij}$ and $\bar{z}_{ij} \geq z^{*}_{ij}$,
 the above sum is the convex combination of $q^2_{ij}, \alpha q^1_{ij}, \text{ and } \alpha \beta q^0_{ij}$,  and   $q^2_{ij} \leq  \alpha q^1_{ij} \leq \alpha \beta q^0_{ij}$ ($ q^2_{ij}  \leq \alpha q^1_{ij} $ and $ q^1_{ij}  \leq \beta q^0_{ij}$).\\

From equation (\ref{r2.1}) and (\ref{r2.3})
\begin{align*}
 & \sum_{i=1}^n c_ix_i^*  +   \sum_{(i,j) \in E} \left( q^2_{ij} y^{*}_{ij} + q^1_{ij} (1 -  y^{*}_{ij} - z^{*}_{ij} ) + q^0_{ij} z^{*}_{ij} \right) \\
& \leq  2\sum_{i=1}^n c_i \bar{x}_i  + \max \{ \alpha, \alpha \beta \} \sum_{(i,j) \in E} \left( q^2_{ij} \bar{y}_{ij} + q^1_{ij} (1 -  \bar{y}_{ij} - \bar{z}_{ij} ) + q^0_{ij} \bar{z}_{ij} \right),\\
& \leq  \max \{2,  \alpha, \alpha \beta \} \left[ \sum_{i=1}^n c_i \bar{x}_i  +  \sum_{(i,j) \in E} \left( q^2_{ij} \bar{y}_{ij} + q^1_{ij} (1 -  \bar{y}_{ij} - \bar{z}_{ij} ) + q^0_{ij} \bar{z}_{ij} \right) \right],\\
& \leq    \max \{2,  \alpha, \alpha \beta \} \left[ \sum_{i=1}^n c_ix_i^o +  \sum_{(i,j) \in E } \left(q_{ij}^2 y_{ij}^o +  q_{ij}^1 (1-z_{ij}^o-y_{ij}^o)+ q_{ij}^0 z_{ij}^o \right) \right],~ \text{ from  (\ref{r2.0})}.
\end{align*}

 This proves the lemma.
\end{proof}

When $ \alpha =1, \beta=1$, we get the condition $q^0_{ij} \geq q^1_{ij} \geq   q^2_{ij} \geq 0$ which is the same condition   used in \cite{2003Rafael}, and in this case the approximation ratio 2. Thus, the Algorithm \ref{Ralgorithm} is a proper generalization of the rounding algorithm of \cite{2003Rafael}.\\

If we consider $c_i \geq 0, ~ \forall i \in V~$, $q_{ij}^0,q_{ij}^1,q_{ij}^2 \in [K,\alpha K],~ \forall (i,j) \in E$,  where   $  K \geq 0 \text{ and } \alpha > 1$, then it is easy to show that the rounding algorithm for GVC is $\max\{2,\alpha\}$-approximate for given $K \geq 0 \text{ and }  \alpha > 1$.\\

In GVC2, assign $c_i= \gamma, \forall i \in V$ and $q_{ij}^2 = \delta, \forall (i,j) \in E$, we call it uniform GVC2 (UGVC2). Then UGVC2 is to find  a subset $ U \subseteq V$ such that $ \displaystyle h(U) = \sum_{i \in U} \gamma + \sum_{(i,j) \in E_2(U)} \delta$ is minimized. \\

\begin{observation} For a given graph $G$, when $\gamma = - \delta$ and $ \delta > 0$,  UGVC2 is equivalent to solving ISP on $G$, and when  $ \gamma = - \delta$ and $ \delta < 0$, UGVC2 is equivalent to solving VCP on $G$.  \label{UVC2lem1}
\end{observation}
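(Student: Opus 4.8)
The two regimes differ only through the sign of $\delta$, so I would start from the common rewriting
\[
h(U) \;=\; \gamma|U| + \delta\,|E_2(U)| \;=\; \delta\bigl(|E_2(U)|-|U|\bigr),
\]
which is valid whenever $\gamma=-\delta$. Thus minimizing $h$ amounts to minimizing $|E_2(U)|-|U|$ when $\delta>0$ and to maximizing $|E_2(U)|-|U|$ when $\delta<0$, and the whole proof reduces to analyzing these two purely combinatorial objectives over $U\subseteq V$.

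For the first regime ($\delta>0$) the plan is a standard exchange argument. Here minimizing $h$ is the same as maximizing $|U|-|E_2(U)|$. If an optimal $U$ contains an edge $(i,j)\in E_2(U)$, then deleting the endpoint $i$ lowers $|U|$ by $1$ and lowers $|E_2(U)|$ by the number of neighbours of $i$ inside $U$, which is at least $1$; hence $|U|-|E_2(U)|$ does not decrease. Iterating this deletion produces an optimal solution that is an independent set, on which the objective equals $|U|$. Consequently an optimal UGVC2 solution can be taken to be a maximum-cardinality independent set, and conversely any maximum independent set $I$ attains the optimal value $-\delta\,|I|$; this yields the claimed equivalence with ISP in both directions.

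For the second regime ($\delta<0$) I would try to mirror this on the complementary set $W=V\setminus U$, using the classical independent-set/vertex-cover duality together with the complementation lemma relating GVC1 and GVC2 established earlier. Writing the objective in terms of $W$ via $|E_2(U)|=|E|-\mathrm{cov}(W)$ and $|U|=|V|-|W|$, where $\mathrm{cov}(W)$ counts the edges met by $W$, converts the maximization of $|E_2(U)|-|U|$ into an optimization over $W$ that I would aim to identify with a minimum vertex cover computation on $G$. The main obstacle is precisely this last identification: the natural exchange step in this regime lets one enlarge $U$ along incident edges, pushing an optimizer toward dense (component-saturated) subsets rather than directly exhibiting the complement of an optimizer as a minimum vertex cover. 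The delicate part of the argument is therefore to show that an optimal $U$ can nonetheless be chosen so that $V\setminus U$ is a (minimum) vertex cover and that the optimal value of $h$ then tracks the vertex-cover number, presumably by passing to the GVC1 side through the complementation lemma, where the vertex-cover structure is transparent.
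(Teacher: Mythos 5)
Your treatment of the first regime ($\delta>0$) is correct and is essentially the paper's own argument: the deletion exchange (removing an endpoint of an edge inside $U$ changes $|U|-|E_2(U)|$ by $d_U(i)-1\ge 0$, where $d_U(i)$ is the number of neighbours of $i$ in $U$) shows some optimal solution is independent, and on independent sets $h(U)=-\delta|U|$, so the optima are exactly the maximum independent sets.

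The second regime, however, is not merely ``delicate'': the identification you are hoping to make is impossible, because that half of the observation is false as stated. Take $G=K_2$ (a single edge) with $\delta=-1$, $\gamma=1$, so $h(U)=|U|-|E_2(U)|$. Then $h(\emptyset)=0$ while $h(\{1\})=h(\{2\})=h(\{1,2\})=1$: the unique minimizer is $\emptyset$, which is not a vertex cover, and no vertex cover is optimal. The same happens on every forest with at least one edge, since a nonempty $U$ induces a forest and hence $|E_2(U)|\le |U|-1$, giving $h(U)\ge\gamma>0=h(\emptyset)$. Even when some optimal $U$ happens to be a vertex cover, it tracks nothing about minimum vertex covers: on $K_3$ the minimizers are $\emptyset$ and $V$ (value $0$) while the minimum vertex cover has size $2$; on $K_4$ the unique minimizer is $V$ (value $-2$) while the minimum vertex cover has size $3$. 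So minimizing $h$ in this regime is the supermodular maximization $\max_U\bigl(|E_2(U)|-|U|\bigr)$, equivalently $\max_W\bigl(|W|-\mathrm{cov}(W)\bigr)$ in your notation --- a max-flow-type ``dense subgraph'' problem with no equivalence to VCP; no detour through the complementation lemma can change this, since that lemma only relabels the same objective. Your diagnosis of the failed exchange is exactly right: adding a vertex $i$ to cover an uncovered edge changes $h$ by $1-d_U(i)$, which is $+1$ whenever $i$ has no neighbour already in $U$. Be aware that the paper's own one-line justification (``in the second case, if any optimal solution is not a vertex cover, we can construct an alternate optimal solution which is a vertex cover'') asserts precisely this failed exchange, so the flaw lies in the source, not in your reading. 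The only correct vertex-cover statement in this vicinity is that, in the \emph{first} regime, the complement $V\setminus I^*$ of the maximum independent set produced there is a minimum vertex cover.
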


The above observation can be easily verified. For the first case if any optimal solution is not an independent set, we can  construct an alternate optimal solution which is an independent set. And in the second case if any optimal solution is not a vertex cover, we can construct an alternate optimal solution which is a vertex cover.\\


%

\begin{lem} If $\bar{G}$ is a subgraph of $G$ and $\sum_{i \in V(\bar{G})}  x_i \leq k$ is a valid inequality for MWISP then $\sum_{i \in V(\bar{G})}  x_i - \sum_{(i,j) \in E(\bar{G})}  y_{ij} \leq k$ is a valid inequality for GVC2, where $V(\bar{G})$ is the node set of $\bar{G}$, $E(\bar{G})$ is the edge set of $\bar{G}$, and $k$ is a constant.
\end{lem}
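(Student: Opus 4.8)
The plan is to verify the inequality directly on every integer feasible point of GVC2-IP, exploiting the fact that for a binary solution the variables encode a vertex subset together with the edges having both endpoints inside it. First I would fix an arbitrary feasible $(\vec{x},\vec{y})$ of GVC2-IP and let $U=\{i\in V: x_i=1\}$ be the corresponding vertex set; the constraints $y_{ij}\le x_i$, $y_{ij}\le x_j$ and $x_i+x_j\le 1+y_{ij}$ force $y_{ij}=x_ix_j$, so $y_{ij}=1$ exactly when both endpoints of $(i,j)$ lie in $U$. Writing $S=U\cap V(\bar{G})$, this gives $\sum_{i\in V(\bar{G})}x_i=|S|$ and $\sum_{(i,j)\in E(\bar{G})}y_{ij}=|E(\bar{G}[S])|$, where $\bar{G}[S]$ is the subgraph of $\bar{G}$ induced by $S$. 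Hence the left-hand side of the claimed inequality equals $|S|-|E(\bar{G}[S])|$, and it suffices to show $|S|-|E(\bar{G}[S])|\le k$ for every $S\subseteq V(\bar{G})$.

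The heart of the argument is the classical bound $\alpha(H)\ge |V(H)|-|E(H)|$ for the independence number of a graph $H$: processing the edges one at a time and deleting one endpoint of each surviving edge removes at least one edge per deletion, so after at most $|E(H)|$ deletions no edges remain and the retained vertices form an independent set. Applying this to $H=\bar{G}[S]$ produces an independent set $I\subseteq S$ of $\bar{G}$ with $|I|\ge |S|-|E(\bar{G}[S])|$; note that $I$ is independent in $\bar{G}$ itself, since $\bar{G}[S]$ carries all edges of $\bar{G}$ among vertices of $S$.

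Finally I would invoke the hypothesis. Because $\sum_{i\in V(\bar{G})}x_i\le k$ is valid for the independent sets of $\bar{G}$, evaluating it at the incidence vector of $I$ yields $|I|\le k$. Chaining the two estimates gives $|S|-|E(\bar{G}[S])|\le |I|\le k$, which is exactly the desired inequality at the fixed feasible point; since that point was arbitrary, $\sum_{i\in V(\bar{G})}x_i-\sum_{(i,j)\in E(\bar{G})}y_{ij}\le k$ is valid for GVC2.

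I expect the only genuine subtlety to be the independence-number bound together with the observation that the extracted set $I$ is independent in $\bar{G}$, so that the hypothesis actually applies to it; everything else is bookkeeping flowing from the binary identity $y_{ij}=x_ix_j$. A secondary point worth making explicit in the write-up is the reading of the hypothesis itself, namely that $\sum_{i\in V(\bar{G})}x_i\le k$ is valid over the independent sets of the subgraph $\bar{G}$, which is precisely the class of sets produced by the deletion procedure.
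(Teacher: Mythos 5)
Your proof is correct, and while its combinatorial core coincides with the paper's, the route is packaged genuinely differently. The paper does not verify the inequality pointwise: after disposing of the case where all $y_{ij}$ vanish on $E(\bar{G})$, it forms an auxiliary integer program maximizing $\sum_{i \in V(\bar{G})} x_i - \sum_{(i,j) \in E(\bar{G})} y_{ij}$ over the GVC2 constraints, identifies it as the uniform problem UGVC2 with $\gamma=-1$ and $\delta=1$, and invokes its Observation 3.5 (UGVC2 with $\gamma=-\delta$, $\delta>0$ is equivalent to ISP on $\bar{G}$) to conclude that the optimum is attained at an independent set of $\bar{G}$ and hence is at most $k$. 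You instead fix an arbitrary binary feasible point, reduce the left-hand side to $|S|-|E(\bar{G}[S])|$ with $S=U\cap V(\bar{G})$, and apply the classical bound $\alpha(H)\ge |V(H)|-|E(H)|$ to $H=\bar{G}[S]$. What your version buys is self-containedness: Observation 3.5 is only asserted in the paper (``can be easily verified''), and its verification is precisely the endpoint-deletion argument you spell out, so you prove inline what the paper outsources; what the paper's version buys is reuse of its UGVC2 machinery. You also rightly flag the one genuine subtlety, which the paper glosses over: the hypothesis must be read as validity over the independent sets of $\bar{G}$ itself (equivalently $\alpha(\bar{G})\le k$). Under the alternative reading---validity for the independent set polytope of $G$---the lemma can fail when $\bar{G}$ is not an induced subgraph: for $G=K_3$ and $\bar{G}$ its edgeless spanning subgraph, $x_1+x_2+x_3\le 1$ is valid for MWISP on $G$, yet taking $U=V$ in GVC2 gives left-hand side $3$. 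Both your proof and the paper's rely on the former reading at exactly the same step.
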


\begin{proof} Let $ (\vec{x},\vec{y})$ be a feasible solution of the given GVC2. If all $y_{ij} =0, ~ \forall (i,j) \in E(\bar{G})$ then it is easy to check that $ \vec{x}$ is the corresponding feasible solution for MWISP and $ (\vec{x},\vec{y})$ satisfies the valid inequality $\sum_{i \in V(\bar{G})}  x_i -\sum_{(i,j) \in E(\bar{G})} y_{ij} = \sum_{i \in V(\bar{G})}  x_i \leq k$.\\

Construct a GVC2 using left hand side of the valid inequality as an objective function as follows:

\begin{align*}
 \text{ Maximize ~} &  ~ \sum_{i \in V(\bar{G})}  x_i -\sum_{(i,j) \in E(\bar{G})} y_{ij}\\
 \text{ Subject to: }  & x_i + x_j  -y_{ij} \leq 1,~ \forall  (i,j) \in E(\bar{G}),\\
  & y_{ij} \leq x_i, y_{ij} \leq x_j ~ \forall  (i,j) \in E(\bar{G}),\\
   & (\vec{x},\vec{y}) \in \{0,1\}^n.
\end{align*}

If we can show that objective function value of the above integer program is bounded above by  $k$ then we are done.  In UGVC2,  select $\gamma =-1$ and  $\delta =1$, then the UGVC2 is equivalent to the above integer program.  From Observation \ref{UVC2lem1}, this case of UGVC2 is  equivalent to solving ISP on $\bar{G}$, proves that this special case of GVC2 on $\bar{G}$ is equivalent to solving ISP on $\bar{G}$, therefore,

\begin{equation*}
\sum_{i \in V(\bar{G})}  x_i -\sum_{(i,j) \in E(\bar{G})} y_{ij} = \sum_{i \in V(\bar{G})}  x_i \leq k
\end{equation*}

This shows that $\sum_{i \in V(\bar{G})}  x_i - \sum_{(i,j) \in E(\bar{G})}  y_{ij} \leq k$ is a valid inequality for GVC2.

\end{proof}

We use this result in the next section to prove properties for some related problems.

\section{Other special cases}

In GVC, if we set $q^{0}_{ij} = \infty$ for all $(i,j) \in E$, an optimal solution will be a vertex cover whenever $c_i$ is finite for $i \in V$. Motivated by this observation, we consider the \textit{vertex cover problem with node and edge weight} (VCPNEW) that seeks  a vertex cover  $P \subseteq  V $ of $G$ which minimizes the cost $\displaystyle \sum_{i \in P}  c_i  + \sum_{ (i,j) \in E_1(P)}q_{ij}^1 + \sum_{ (i,j) \in E_2(P)}  q_{ij}^2$, where $\displaystyle E_1(P) = \{(i,j): \text{ either } i \in P \text{ or } j \in P \text{ but not both} \} $,  $\displaystyle E_2(P) = \{(i,j): i,j \in P \}$. Unlike GVC, feasible solutions of VCPNEW are vertex covers. When $q^1_{ij}=0$ for all $(i,j) \in E$, we call the resulting VCPNEW as the \textit{vertex cover problem with over cover penalties} (VCOP) and when $q^{2}_{ij} = 0$ for all $(i,j) \in E$,  we call the resulting problem as the \textit{vertex cover problem with under cover penalties} (VCUP). \\

An interpretation of VCPNEW can be given as follows.  There are $n$ cities, and  city $i$ is located at the vertex  $i$ of the graph $G$, and two cities $i$ and $j$ are directly  connected if there is an edge $(i,j) \in E$. A  Multinational company Com1 is planning to open warehouses at different cities such that for every $(i,j) \in E$, there is at least one warehouse, either at $i$ or at $j$ or both. There is a setup cost $c_i$, if Com1 is opening a warehouse at city $i$. And if for any edge $(i,j) \in E$, there are two  warehouses at both ends of this edge then  over-covering cost $q^2_{ij}$ occurred, and if there is one  warehouse exactly  at one end of this edge then  under-covering cost $q^1_{ij}$ occurred. Com1 is planning to open warehouses such that sum of the setup cost, over-covering cost, and under-covering cost is minimized. This problem  can be formulated as the VCPNEW.\\

\begin{thm} VCPNEW, VCOP, and VCUP are equivalent. \label{vcpnewthm1}
\end{thm}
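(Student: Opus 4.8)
The plan is to establish the three pairwise equivalences by showing that each variant can be transformed into another through a constant shift in the objective function, exactly in the spirit of the proof of Theorem \ref{gen2}. The key observation is that all three problems share the same feasible region (the set of vertex covers of $G$), so I need only show that the objective functions differ by problem-independent constants once the vertex and edge weights are suitably redefined. This means an optimal solution for one instance is automatically optimal for the transformed instance of another.

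First I would reduce VCPNEW to VCOP. Given a VCPNEW instance with weights $c_i$, $q_{ij}^1$, $q_{ij}^2$, I would define new vertex weights $c_i' = c_i + \sum_{(i,j)\in E}(q_{ij}^2 - q_{ij}^1)$ and new over-cover weights $q_{ij}^{2'} = q_{ij}^2 - 2q_{ij}^1$, setting $q_{ij}^{1'} = 0$. The idea is the same accounting identity used in Theorem \ref{gen2}: because feasible solutions are vertex covers, every edge $(i,j)$ lies in either $E_1(P)$ or $E_2(P)$, so I can rewrite $\sum_{(i,j)\in E_1(P)}q_{ij}^1$ in terms of the endpoint-membership sums $\sum_{(i,j)\in E,\,i\in P}q_{ij}^1 + \sum_{(i,j)\in E,\,j\in P}q_{ij}^1$ minus twice the $E_2(P)$ contribution. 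Collecting the endpoint sums into the vertex weights and the remaining $E_2(P)$ correction into the over-cover weights yields an objective that equals the VCOP objective plus a fixed constant. The converse is immediate since VCOP is the special case $q_{ij}^1=0$ of VCPNEW.

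Next I would reduce VCPNEW to VCUP by a symmetric construction: set $q_{ij}^{2'} = 0$, fold the $E_2(P)$ contribution into the vertex weights via $c_i'' = c_i + \sum_{(i,j)\in E}(q_{ij}^2)$ type corrections, and define new under-cover weights $q_{ij}^{1'} = q_{ij}^1 - q_{ij}^2$ (the precise constants to be verified by the same edge-partition identity). Again the two objectives differ by a constant over the common feasible set of vertex covers, and VCUP being the $q_{ij}^2 = 0$ special case gives the reverse direction for free. Chaining these two reductions through VCPNEW then establishes all pairwise equivalences.

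The main obstacle I anticipate is purely bookkeeping rather than conceptual: I must be careful that the identity $\sum_{(i,j)\in E_1(P)}q_{ij}^1 = \sum_{(i,j)\in E,\,i\in P}q_{ij}^1 + \sum_{(i,j)\in E,\,j\in P}q_{ij}^1 - 2\sum_{(i,j)\in E_2(P)}q_{ij}^1$ holds precisely because $P$ is a vertex cover, so $E_0(P) = \emptyset$ and every edge has at least one endpoint in $P$. This is exactly why the reduction is cleaner here than in Theorem \ref{gen2}: there is no $E_0$ term to track. A secondary subtlety is confirming the correct sign and coefficient on each redefined weight so that the difference is genuinely a constant independent of $P$; I would verify this by substituting the partition identity into the objective and checking that every $P$-dependent term cancels except those absorbed into $c_i'$ and $q_{ij}^{2'}$ (respectively $q_{ij}^{1'}$).
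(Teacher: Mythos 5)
Your high-level plan is exactly the paper's: all three problems have the same feasible set (the vertex covers of $G$), so it suffices to redefine the weights so that the objectives differ by a constant independent of $P$. However, both of the concrete transformations you wrote down fail that test. For the VCPNEW-to-VCOP step, with your weights $c_i' = c_i + \sum_{(i,j)\in E}(q_{ij}^2 - q_{ij}^1)$ and $q_{ij}^{2'} = q_{ij}^2 - 2q_{ij}^1$, substituting your own partition identity gives
\[
f(P) - g(P) \;=\; \sum_{(i,j) \in E,\, i\in P}\bigl(2q_{ij}^1 - q_{ij}^2\bigr) + \sum_{(i,j) \in E,\, j\in P}\bigl(2q_{ij}^1 - q_{ij}^2\bigr),
\]
which depends on $P$, so an optimal solution of the transformed VCOP instance need not be optimal for VCPNEW. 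The source of the error: the shift $\sum_{(i,j)\in E}(q_{ij}^2 - q_{ij}^1)$ is the vertex weight used in Theorem \ref{gen6} to reduce VCPNEW to MWVCP, where \emph{all} edge terms are eliminated; if instead you keep an $E_2(P)$ term with coefficient $q_{ij}^2-2q_{ij}^1$, the matching vertex shift must be $c_i' = c_i + \sum_{(i,j)\in E} q_{ij}^1$ (and then $f(P)=g(P)$ exactly). Your VCUP step has the symmetric defect: you absorb the $q^2$ contribution twice, once into $c_i''$ and once into $\bar{q}_{ij}^1 = q_{ij}^1 - q_{ij}^2$, leaving the $P$-dependent residual $f(P)-h(P) = -\sum_{(i,j)\in E_2(P)} q_{ij}^2$.

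The paper's proof is simpler precisely because it never introduces endpoint-membership sums here. Since $P$ is a vertex cover, $E_1(P) = E \setminus E_2(P)$, so
\[
f(P) = \sum_{i\in P} c_i + \sum_{(i,j)\in E_2(P)}\bigl(q_{ij}^2 - q_{ij}^1\bigr) + \sum_{(i,j)\in E} q_{ij}^1,
\]
i.e.\ keep $c_i$ unchanged and set $q_{ij}^{2'} = q_{ij}^2 - q_{ij}^1$; the two objectives then differ by the constant $\sum_{(i,j)\in E} q_{ij}^1$. Symmetrically, $E_2(P) = E\setminus E_1(P)$ yields the VCUP reduction with $c_i$ unchanged, $\bar{q}_{ij}^1 = q_{ij}^1 - q_{ij}^2$, and constant $\sum_{(i,j)\in E} q_{ij}^2$. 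The converses are immediate since VCOP and VCUP are special cases, as you note. So your framework is sound, and the final verification step you propose would have exposed the problem (the $P$-dependent terms do not cancel); but as stated the reductions are wrong, and the fix is either to drop the vertex-weight corrections entirely, as the paper does, or to use the matching shifts above---not to mix the two bookkeeping schemes.
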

\begin{proof}  We first show that VCPNEW can be formulated as VCOP. Given an instance of VCPNEW, define

\begin{align*}
    c_i^{'} = & c_i,~ \forall ~ i \in V  \text{ and }\\
     q_{ij}^{2^{'}}  =& q_{ij}^2-q_{ij}^1,~ (i,j) \in E.
\end{align*}

\noindent  Now, consider the instance of VCOP where $c_i$ is replaced by  $c_i^{'}$ and  $q_{ij}^{2}$ is replaced by $q_{ij}^{2^{'}}$. Then, for any  vertex cover $P \subseteq V$, the objective function of this VCOP  can be written as

 \begin{equation*}
  g(P)= \sum_{ i\in P} c_i^{'}  + \sum_{(i,j) \in E_2(P)}q_{ij}^{2^{'}} =   \sum_{ i\in P} c_i  + \sum_{(i,j) \in E_2(P)} ( q_{ij}^2-q_{ij}^1).
 \end{equation*}

The objective function $f(P)$ of VCPNEW is given by

\begin{align*}
f(P) =& \sum_{i \in P}  c_i + \sum_{(i,j) \in E_2(P)}  q_{ij}^2 + \sum_{ (i,j) \in E_1(P)}q_{ij}^1,\\
     =& \sum_{i \in P}  c_i + \sum_{(i,j) \in E_2(P)}  q_{ij}^2 + \sum_{ (i,j) \in E}q_{ij}^1 -\sum_{ (i,j) \in E_2(P)}q_{ij}^1,\\
     =& \sum_{i \in P}  c_i + \sum_{(i,j) \in E_2(P)}  (q_{ij}^2  - q_{ij}^1 ) + \sum_{ (i,j) \in E}q_{ij}^1,\\
    = & g(P) + \sum_{(i,j) \in E} q_{ij}^1.
\end{align*}

\noindent Thus, for any vertex cover $P \subseteq V$ of VCPNEW and the instance of VCOP constructed above, $f(P) - g(P) $ is a constant. Therefore, an optimal solution to this VCPNEW  will also be  an optimal solution to VCOP.\\

Since VCOP is a special case of VCPNEW, any VCOP can be formulated as VCPNEW, establishing equivalence between VCPNEW and VCOP.\\

To establish the equivalence between VCPNEW and VCUP, define
\begin{align*}
    \bar{c}_i = & c_i,~ \forall ~ i \in V  \text{ and } \\
     \bar{q}_{ij}^{1}  =& q_{ij}^1- q_{ij}^2,~ (i,j)  \in E.
\end{align*}

\noindent Consider the instance of  VCUP where $c_i$ is replaced by  $\bar{c}_i$ and  $q_{ij}^{1}$ is replaced by $\bar{q}_{ij}^{1}$. As in the previous case, we can show that an optimal solution to this VCUP provides an optimal solution to VCPNEW and the converse follows the fact that VCUP is a special case of VCPNEW.\\
\end{proof}

\begin{thm} \label{gen6} VCPNEW, VCOP, and VCUP are equivalent to the minimum weight vertex cover problem (MWVCP).
\end{thm}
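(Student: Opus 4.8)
The plan is to prove Theorem \ref{gen6} by reducing VCPNEW to MWVCP and, conversely, observing that MWVCP is a special case of VCPNEW. By Theorem \ref{vcpnewthm1}, VCPNEW, VCOP, and VCUP are already equivalent, so it suffices to establish equivalence for just one of them; I would work with VCOP, where $q^1_{ij}=0$ and the objective is $\sum_{i \in P} c_i + \sum_{(i,j) \in E_2(P)} q^2_{ij}$ over vertex covers $P$. The easy direction is immediate: setting all $q^2_{ij}=0$ (equivalently, choosing the over-cover penalties to vanish) collapses VCOP to the plain minimum-weight vertex cover objective $\sum_{i \in P} c_i$, so MWVCP is a special case of VCOP and hence of VCPNEW. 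The substance of the theorem is the reverse reduction, transforming a VCOP instance into an MWVCP instance whose optimal cover is recoverable as an optimal solution to the original.

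First I would exploit the fact that in any \emph{vertex cover} $P$, every edge $(i,j)\in E$ has at least one endpoint in $P$, so the edges split only into $E_1(P)$ (exactly one endpoint) and $E_2(P)$ (both endpoints); the set $E_0(P)$ is empty. This is the key structural simplification that distinguishes VCPNEW from GVC and makes the over-cover penalty expressible purely in terms of the chosen vertices. The natural move is to rewrite the quadratic-looking term $\sum_{(i,j)\in E_2(P)} q^2_{ij}$ as a linear function of the membership indicators. Using the identity that for a vertex cover, $(i,j)\in E_2(P)$ precisely when both $i,j\in P$, I would distribute each edge penalty $q^2_{ij}$ onto its endpoints: the plan is to absorb a portion of $q^2_{ij}$ into the vertex weights $c_i$ and $c_j$ via the same bookkeeping already used in the proof of Theorem \ref{gen2}, namely writing $\sum_{(i,j)\in E_2(P)} q^2_{ij}$ through the edge-counting identity $\sum_{(i,j)\in E, i\in P} q^2_{ij} + \sum_{(i,j)\in E, j\in P} q^2_{ij} - \sum_{(i,j)\in E_1(P)} q^2_{ij}$. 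Over the restricted domain of vertex covers this should let me convert VCOP into a node-weighted problem, modulo a term supported on $E_1(P)$.

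The main obstacle is precisely that leftover $E_1(P)$ term: distributing each edge weight to both endpoints double-counts the edges in $E_2(P)$ but single-counts those in $E_1(P)$, so a residual $\sum_{(i,j)\in E_1(P)} q^2_{ij}$ appears with the wrong sign and cannot be folded into a constant, since $E_1(P)$ varies with $P$. I expect the cleanest route to be the indirect one: rather than fight this directly, I would invoke the already-established chain of equivalences, arguing that VCPNEW is the special case of GVC obtained by sending $q^0_{ij}\to\infty$ (so feasible solutions are forced to be vertex covers, as noted at the start of Section 4), and then that under the non-negativity hypotheses $\sum_{(i,j)\in E}(2q^1_{ij}-q^2_{ij})\ge 0$ and $c_i + \sum_{(i,j)\in E}(q^2_{ij}-q^1_{ij})\ge 0$ (flagged in the introduction as the conditions under which VCPNEW reduces to MWVCP), the transformation of Theorem \ref{gen2} produces an instance with non-negative data on which the GVC-to-MWVCP collapse is exact. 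Concretely, I would show that the instance built in the reduction has all penalties arranged so that no feasible optimum ever strictly over-covers an edge, whence the over-cover term contributes a fixed constant and the problem reduces to minimizing $\sum_{i\in P} c'_i$ over vertex covers, i.e.\ MWVCP. Care must be taken to verify that the sign conditions guarantee the transformed vertex weights remain non-negative, so that the resulting MWVCP is a genuine (rather than ill-posed) instance; establishing this non-negativity is the delicate step, and I would handle it by checking it edge-by-edge exactly as in the feasibility verification inside Theorem \ref{gen2}.
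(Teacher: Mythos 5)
Your proposal has a genuine gap, and ironically it is created by abandoning the approach that works. The ``main obstacle'' you identify --- a residual $\sum_{(i,j)\in E_1(P)} q^2_{ij}$ that ``cannot be folded into a constant, since $E_1(P)$ varies with $P$'' --- does not exist, and the edge-counting identity you wrote is incorrect. Distributing edge weights to endpoints gives
\[
\sum_{(i,j)\in E,\, i\in P} q^2_{ij} \;+\; \sum_{(i,j)\in E,\, j\in P} q^2_{ij}
\;=\; 2\sum_{(i,j)\in E_2(P)} q^2_{ij} \;+\; \sum_{(i,j)\in E_1(P)} q^2_{ij},
\]
and since $P$ is a vertex cover, $E_0(P)=\emptyset$, so $\sum_{(i,j)\in E_1(P)} q^2_{ij} = \sum_{(i,j)\in E} q^2_{ij} - \sum_{(i,j)\in E_2(P)} q^2_{ij}$. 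Substituting, the correct identity is
\[
\sum_{(i,j)\in E_2(P)} q^2_{ij} \;=\; \sum_{(i,j)\in E,\, i\in P} q^2_{ij} + \sum_{(i,j)\in E,\, j\in P} q^2_{ij} \;-\; \sum_{(i,j)\in E} q^2_{ij},
\]
where the subtracted term is the \emph{constant} $\sum_{(i,j)\in E} q^2_{ij}$, not the $P$-dependent quantity $\sum_{(i,j)\in E_1(P)} q^2_{ij}$ in your proposal. The same manipulation on the $q^1$ term yields, for every vertex cover $P$,
\[
f(P) \;=\; \sum_{i\in P}\Bigl(c_i + \sum_{(i,j)\in E}\bigl(q^2_{ij}-q^1_{ij}\bigr)\Bigr) \;+\; \sum_{(i,j)\in E}\bigl(2q^1_{ij}-q^2_{ij}\bigr),
\]
i.e.\ $f$ differs by a constant from the MWVCP objective with node weights $w_i = c_i + \sum_{(i,j)\in E}(q^2_{ij}-q^1_{ij})$. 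This is precisely the paper's proof; it requires no sign conditions and works directly for VCPNEW (your restriction to VCOP via Theorem \ref{vcpnewthm1} is harmless but unnecessary).

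The fallback route you retreat to does not prove the stated theorem. You import the hypotheses $\sum_{(i,j)\in E}(2q^1_{ij}-q^2_{ij})\ge 0$ and $c_i+\sum_{(i,j)\in E}(q^2_{ij}-q^1_{ij})\ge 0$, but the theorem asserts equivalence for \emph{arbitrary} data; those conditions belong to the later approximation-preservation result (Lemma \ref{r3}), not to equivalence at optimality. Your worry that negative transformed weights make the MWVCP instance ``ill-posed'' is misplaced in this setting, since the paper's notion of equivalence is recovery of an optimal solution, and minimization over vertex covers is well defined for arbitrary node weights. Moreover, the detour through GVC with $q^0_{ij}=\infty$ and the transformation of Theorem \ref{gen2} is never actually carried out: that transformation would set $q^{0'}_{ij}=q^2_{ij}-2q^1_{ij}+q^0_{ij}$, which is infinite here, so the ``collapse'' you invoke would itself require a new argument. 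As written, the proposal establishes at best a conditional, strictly weaker statement.
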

\begin{proof} We first show that  VCPNEW can be formulated as MWVCP.  For any vertex cover $P \subseteq \{1,2,\hdots,n\}$, let $ f(P)$ and $\phi(P)$  be the objective function values of  VCPNEW and MWVCP respectively. \\

Define

\begin{equation*}
w_i = c_i + \sum_{(i,j) \in E} (q_{ij}^2 - q_{ij}^1)
\end{equation*}
 and consider the MWVCP with node weight $w_i, ~ \forall i \in V$.\\

For any vertex cover $ P \subseteq  \{1,2\hdots,n\}$ of VCPEW

\begin{equation}
\label{VCPNEWVC}   f(P) =  \sum_{i \in P} c_i  + \sum_{ (i,j) \in E_1(P) } q^{1}_{ij} + \sum_{ (i,j) \in E_2(P)} q^{2}_{ij}. \\
\end{equation}

Since $P$ is a vertex cover, at least one end point of any edge will be in $P$, therefore, it can be verified that
\begin{eqnarray}
\label{VCPNEWVC1}  \sum_{ (i,j) \in E_2(P)} q^{2}_{ij} &=&  \sum_{ (i,j) \in E, i\in P} q^{2}_{ij} + \sum_{ (i,j) \in E, j\in U} q^{2}_{ij}  - \sum_{ (i,j) \in E} q^{2}_{ij}, \text{ and }\\
 \label{VCPNEWVC2}  \sum_{ (i,j) \in E_1(P) } q^{1}_{ij} & =&   \sum_{ (i,j) \in  E} q^{1}_{ij} - \sum_{ (i,j) \in  E_2(P)} q^{1}_{ij},\\
\nonumber &=& 2\sum_{ (i,j) \in E} q^{1}_{ij} - \sum_{ (i,j) \in E, i\in P} q^{1}_{ij} - \sum_{ (i,j) \in E, j\in P} q^{1}_{ij} ~~~( \text{ from  (\ref{VCPNEWVC1})}).
\end{eqnarray}

From (\ref{VCPNEWVC}), (\ref{VCPNEWVC1}) and (\ref{VCPNEWVC2}), we have

\begin{align}
\nonumber f(P) =&  \sum_{i \in P} c_i + 2\sum_{ (i,j) \in E} q^{1}_{ij} - \sum_{ (i,j) \in E, i\in P} q^{1}_{ij} - \sum_{ (i,j) \in E, j\in P} q^{1}_{ij} \\
\nonumber &+   \sum_{ (i,j) \in E, i\in P} q^{2}_{ij} + \sum_{ (i,j) \in E, j\in P} q^{2}_{ij}  - \sum_{ (i,j) \in E} q^{2}_{ij}, \\
 \nonumber = & \sum_{i \in P} \left(c_i + \sum_{(i,j) \in E} \left( q_{ij}^2 - q_{ij}^1 \right) \right)  + \sum_{(i,j) \in E} (2q_{ij}^1 - q_{ij}^2),  \\
  \label{VCPNEWVC1.1}  =&  \phi(P) + \sum_{(i,j) \in E} (2q_{ij}^1 - q_{ij}^2).
\end{align}

Thus an optimal solution to the MWVCP constructed above is also an optimal solution to VCPNEW.\\

 Let $\vec{w}=(w_1,w_2,\hdots,w_n)$ be an instance of MWVCP. Choose $c_i=w_i,~ \forall i \in V$, and $q^2_{ij} = q_{ij}^1 = 0,~ $ for all  $(i,j) \in E$. Then, an optimal solution to the VCPNEW on $\vec{c},  \vec{q^2}, \text{ and } \vec{q^1}$ solves MWVCP. Thus VCPNEW and MWVCP are equivalent. Now the result follows from the Theorem \ref{vcpnewthm1}.
\end{proof}


It may be noted that the equivalence established in Theorem \ref{gen6} preserves optimality, domination ratio \cite{1997Fred}, and differential approximation ratio \cite{1996Demange}. The reduction provided in the proof do not preserve $\epsilon-$optimality. \\

For a given instance   of VCPNEW with data $\vec{c}, \vec{q^2}, \vec{q^1}$, equivalent MWVCP (MWVCP-EQV) is defined with node weights  $c_i^{'} = c_i + \sum_{(i,j) \in E} (q_{ij}^2 - q_{ij}^1) $ for all $i\in V$. Let  $P^o$ be an optimal vertex cover for MWVCP-EQV and for a given vertex cover $P$,~ $\phi(P)$ is the objective function value of  MWVCP-EQV.  Define $\delta = \dfrac{\sum_{(i,j) \in E} (2q_{ij}^1 - q_{ij}^2) }{\phi(P^o)}$. Now we can state

\begin{lem} If $P^*$ is an $\epsilon$-approximate vertex cover for MWVCP-EQV, then $P^*$ is  $\dfrac{\epsilon+ \delta}{1+ \delta}$-approximate vertex cover for VCPNEW whenever  $\sum_{(i,j) \in E} (2q_{ij}^1 - q_{ij}^2) \geq 0$,~  $c_i^{'}  \geq 0$ for all $i\in V$, where $\delta$ is defined above. \label{r3}
\end{lem}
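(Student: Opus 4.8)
The plan is to reduce everything to the exact additive identity already obtained in the proof of Theorem~\ref{gen6} and then carry out a one-line ratio computation, with the two hypotheses used only to keep the ratio meaningful. First I would abbreviate $C := \sum_{(i,j)\in E}(2q_{ij}^1 - q_{ij}^2)$, so that equation~(\ref{VCPNEWVC1.1}) reads $f(P) = \phi(P) + C$ for \emph{every} vertex cover $P$, where $f$ is the VCPNEW objective and $\phi$ the objective of MWVCP-EQV. Since $f$ and $\phi$ differ only by the constant $C$, any minimizer of $\phi$ is simultaneously a minimizer of $f$; in particular $P^o$ is optimal for VCPNEW as well, and the optimal VCPNEW value is $f(P^o) = \phi(P^o) + C$.

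Next I would write $\Phi := \phi(P^o)$ and invoke the $\epsilon$-approximation hypothesis in the form $\phi(P^*) \le \epsilon\,\Phi$. Substituting the definition of $\delta$ in the form $C = \delta\,\Phi$, I would bound
\[
f(P^*) \;=\; \phi(P^*) + C \;\le\; \epsilon\,\Phi + \delta\,\Phi \;=\; (\epsilon + \delta)\,\Phi,
\qquad
f(P^o) \;=\; \Phi + \delta\,\Phi \;=\; (1+\delta)\,\Phi .
\]
Dividing the first inequality by the (positive) quantity $f(P^o)$ and cancelling $\Phi$ yields exactly
\[
\frac{f(P^*)}{f(P^o)} \;\le\; \frac{\epsilon + \delta}{1+\delta},
\]
which is the claimed guarantee.

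The computation itself is immediate; the only points requiring care are the roles of the two side conditions, and these are where I would be explicit. The assumption $c_i' \ge 0$ for all $i\in V$ makes $\phi$ a genuine nonnegative minimization objective, so that the $\epsilon$-approximation inequality $\phi(P^*)\le\epsilon\,\Phi$ is meaningful and $\Phi = \phi(P^o)\ge 0$; this legitimizes cancelling $\Phi$ and guarantees $f(P^o) = (1+\delta)\Phi > 0$, keeping the ratio well defined. The assumption $\sum_{(i,j)\in E}(2q_{ij}^1 - q_{ij}^2)\ge 0$, i.e.\ $C\ge 0$, forces $\delta \ge 0$, which is precisely what places $(\epsilon+\delta)/(1+\delta)$ in the interval $[1,\epsilon]$: from $\epsilon\ge 1$ and $\delta\ge 0$ one checks $(\epsilon+\delta)/(1+\delta)\ge 1$ and $(\epsilon+\delta)/(1+\delta)\le \epsilon$, so the derived bound is a valid and in fact improved approximation ratio ($\epsilon'\le\epsilon$) rather than a vacuous statement.

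I do not anticipate any genuine obstacle here: all of the substance is carried by the additive-constant identity from Theorem~\ref{gen6}, and the remainder is sign-bookkeeping to confirm the two hypotheses make the transferred ratio legitimate. The one thing I would be careful to state, rather than silently assume, is that $\delta$ is well defined only when $\Phi>0$, which is implicit in the definition of $\delta$ and consistent with the nonnegativity hypotheses.
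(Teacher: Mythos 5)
Your proof is correct and follows essentially the same route as the paper: both rest on the additive identity $f(P)=\phi(P)+\sum_{(i,j)\in E}(2q_{ij}^1-q_{ij}^2)$ from Theorem~\ref{gen6}, apply the hypothesis $\phi(P^*)\le\epsilon\,\phi(P^o)$, and use the sign conditions to justify the ratio $(\epsilon+\delta)/(1+\delta)$; the paper merely packages the final step as an abstract fact about three nonnegative constants $H,O,K$, whereas you compute $f(P^*)\le(\epsilon+\delta)\Phi$ and $f(P^o)=(1+\delta)\Phi$ directly. If anything, you are slightly more careful than the paper in stating explicitly that $P^o$ remains optimal for VCPNEW (so the ratio really is an approximation guarantee) and that $\delta$ requires $\phi(P^o)>0$ to be well defined.
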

\begin{proof} For a vertex cover $P$ of  VCPNEW, let $f(P)$ is the objective function value of VCPNEW  and $\phi(P)$ is the objective function value of  MWVCP-EQV. From equation (\ref{VCPNEWVC1.1}) we know that

\begin{align}
\label{ratio1}  f(P) =&  \phi(P) + \sum_{(i,j) \in E} (2q_{ij}^1 - q_{ij}^2).
\end{align}

Since $P^*$ is an $\epsilon$-approximate vertex cover for MWVCP-EQV, therefore,

\begin{align}
\label{ratio2} & \dfrac{\phi(P^*)}{\phi(P^o)} \leq \epsilon.
\end{align}

When $c_i^{'} \geq 0$ for all $i\in V$, then $\phi(P^*), \phi(P^o) \geq 0$ and  in addition if $\sum_{(i,j) \in E} (2q_{ij}^1 - q_{ij}^2) \geq 0$ then $ \delta  \geq 0$. If three constants $H,O,K \geq 0$,~ $\dfrac{H}{O} \leq \epsilon$, and $K=\delta O$ for $\delta \geq 0$, then it is easy to check that $\dfrac{H+K}{O+K} \leq \dfrac{\epsilon+ \delta}{1+ \delta}$. Let $H= \phi(P^*), ~ O =\phi(P^o) , \text{ and } K= \sum_{(i,j) \in E} (2q_{ij}^1 - q_{ij}^2)=\delta \phi(P^o)$.  Therefore, for the given conditions in the lemma and  using equations (\ref{ratio1}) and (\ref{ratio2}) we get

\begin{align}
\nonumber  & \dfrac{H+K}{O+K} = \dfrac{\phi(P^*)+ \sum_{(i,j) \in E} (2q_{ij}^1 - q_{ij}^2)}{\phi(P^o)+\sum_{(i,j) \in E} (2q_{ij}^1 - q_{ij}^2)} \leq \dfrac{\epsilon+ \delta}{1+ \delta},~~~ \Rightarrow ~~  \dfrac{f(P^*))}{f(P^o)} \leq \dfrac{\epsilon+ \delta}{1+ \delta}
\end{align}

which implies that $P^*$ is an $\dfrac{\epsilon+ \delta}{1+ \delta}$-approximate vertex cover to VCPNEW (this is a tighter bound than $\epsilon$). This completes the proof.
\end{proof}

The well known rounding algorithm for MWVCP is 2-approximate \cite{1982Dorit}, therefore, in the view of Theorem \ref{r3} this rounding algorithm provides 2-approximate solution for VCPNEW if  $\sum_{(i,j) \in E} (2q_{ij}^1 - q_{ij}^2) \geq 0$ and $c_i + \sum_{(i,j) \in E} (q_{ij}^2 - q_{ij}^1) \geq 0$ for all $i\in V$.\\

When $q^0_{ij} \geq q^1_{ij}\geq q^2_{ij} \geq 0$ for all $(i,j)\in E$,  a 2-approximation rounding algorithm  for GVC has been given  in \cite{2003Rafael}. Since $q^0_{ij} = \infty$ in the case of VCPNEW, from \cite{2003Rafael}, if $q^0_{ij} \geq q^1_{ij}\geq q^2_{ij} \geq 0$, VCPNEW can be solved by a 2-approximation algorithm. The condition given in Lemma \ref{r3} is more relaxed then  $q^0_{ij} \geq q^1_{ij}\geq q^2_{ij} \geq 0$ for all $(i,j)\in E$, but still guarantees a 2-approximation solution.\\

Let us now consider integer programming formulations of VCPNEW, VCOP, and VCUP. Consider the binary variables,

For each  node $i \in V$  define
 \begin{equation*}
x_i =
\begin{cases}
1 &\text{if node $i$ is in the vertex cover,}\\
0 &\text{otherwise,}
\end{cases}
\end{equation*}
 and for every edge $(i,j)$ define
\begin{equation*}
y_{ij} =
\begin{cases}
1 &\text{if node $i$ and $j$ both are in the vertex cover,}\\
0 &\text{otherwise,}
\end{cases}
\end{equation*}
and
\begin{equation*}
r_{ij} =
\begin{cases}
1 &\text{if exactly one of node $i$ or $j$ is in the vertex cover,}\\
0 &\text{otherwise.}\\
\end{cases}
\end{equation*}

Then VCPNEW, VCOP, and VCUP can be formulated respectively as:

\begin{align}\nonumber  \text{ VCPNEW-IP: ~ Minimize}   ~  & \sum_{i=1}^{n}c_ix_i + \sum_{(i,j) \in E} q_{ij}^1 r_{ij} +\sum_{(i,j) \in E} q_{ij}^2 y_{ij}, \\
\label{vd} \text{Subjct to: ~~} & x_i + x_j -y_{ij} = 1, ~\forall (i,j) \in E,\\
\label{vd1} & y_{ij} + r_{ij} = 1, ~\forall (i,j) \in E,\\
\nonumber & x_i \in \{0,1 \}, ~\forall i \in V, \\
\nonumber & y_{ij},r_{ij} \geq 0,  ~ \forall (i,j) \in E.
\end{align}

\begin{align*} \mbox{ VCOP-IP: ~ Minimize}    ~  & \sum_{i=1}^{n}c_ix_i+\sum_{(i,j) \in E} q_{ij}^2 y_{ij} \\
 \text{Subjct to: ~~} & x_i + x_j -y_{ij} = 1, ~ \forall (i,j) \in E,\\
 & x_i \in \{0,1 \}, ~ \forall i \in V, \\
 & y_{ij} \geq 0,  ~ \forall (i,j) \in E.
\end{align*}

\begin{align*}\nonumber \mbox{ VCUP-IP: ~ Minimize}    ~  & \sum_{i=1}^{n}c_ix_i+\sum_{(i,j) \in E} q_{ij}^1 r_{ij} \\
 \text{Subjct to: ~~} & x_i + x_j + r_{ij} = 2, ~ \forall (i,j) \in E,\\
 & x_i \in \{0,1 \}, ~ \forall i \in V, \\
 & r_{ij} \geq 0,  ~ \forall (i,j) \in E.
\end{align*}

 The LP relaxation of VCPNEW-IP, VCOP-IP, and VCUP-IP are denoted by  VCPNEW-LP, VCOP-LP, and VCUP-LP respectively.\\

Hochbaum \cite{2002Dorit} presented  a class of integer programs called  IP2 and  showed that the extreme points to the LP relaxations of such problems are half-integral, and the relaxations can be solved using network flow algorithm in $O(mn)$ time. VCPNEW, VCOP, and VCUP are in IP2, therefore, all  extreme point  of VCPNEW-LP, VCOP-LP, and VCUP-LP are half-integral.\\

When $q^1_{ij} \geq 0$, $q^0_{ij}=q^2_{ij}=0$ for all $(i,j) \in E$ and $c_i=0, ~ \forall i \in V$, then VCUP is equivalent to minimum-cut problem \cite{1998Papadimitriou,1973Picard} and when $q^1_{ij} \leq 0$, $q^0_{ij}=q^2_{ij}=0$ for all $(i,j) \in E$ and $c_i=0, ~ \forall i \in V$, VCUP is equivalent to max-cut problem \cite{1991Boros,1975Picard}. Therefore, the minimum-cut problem  and the max-cut problem  are special cases of VCUP.\\

Moreover  $q^1_{ij} \geq 0$, $q^0_{ij}=q^2_{ij}=0$ for all $(i,j) \in E$ and $\vec{c}$ arbitrary, still  VCUP is equivalent to minimum-cut problem. Therefore, in this case also VCUP is polynomial time solvable.\\





In GVC, if we set $q^{2}_{ij} = \infty$ for all $(i,j) \in E$, an optimal solution will be an independent set whenever $c_i$ is finite for $i \in V$. Motivated by this observation, we consider the \textit{independent set problem with node and edge weight} (ISPNEW) that seeks  an independent set  $P \subseteq  V $ of $G$ which maximizes the cost $\displaystyle \sum_{i \in P}  c_i + \sum_{i,j \in P, (i,j) \in E}  q_{ij}^0 + \sum_{i \text{ or} j \in P, (i,j) \in E}q_{ij}^1$, where where $\displaystyle E_1(P) = \{(i,j): \text{ either } i \in P \text{ or } j \in P \text{ but not both} \} $,  $\displaystyle E_0(P) = \{(i,j): i,j \not\in P \}$. Unilike GVC, feasible solutions of ISPNEW are independent sets. When $q^1_{ij}=0$ for all $(i,j) \in E$, we call the resulting ISPNEW as the \textit{independent set problem with over  penalties} (ISOP) and when $q^{2}_{ij} = 0$ for all $(i,j) \in E$,  we call the resulting problem as the \textit{independent set problem with under  penalties} (ISUP). \\

\begin{thm} ISPNEW, ISOP, and ISUP are equivalent. \label{ispnewthm1}
\end{thm}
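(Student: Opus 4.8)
The plan is to mirror exactly the structure of Theorem \ref{vcpnewthm1}, since ISPNEW, ISOP, and ISUP are the independent-set analogues of VCPNEW, VCOP, and VCUP, and the algebraic relationship between an over-cover weight $q^2$ and an under-cover weight $q^1$ on covers is structurally identical to the relationship between an over-independence weight $q^0$ and an under-independence weight $q^1$ on independent sets. First I would establish that ISPNEW can be reformulated as ISOP (the case $q^1_{ij}=0$) by folding the $q^1$ contribution into the remaining weights. The key observation is that for an independent set $P$, \emph{no} edge has both endpoints in $P$, so every edge $(i,j)\in E$ lies either in $E_0(P)$ or in $E_1(P)$. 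Hence $\sum_{(i,j)\in E_1(P)} q^1_{ij} = \sum_{(i,j)\in E} q^1_{ij} - \sum_{(i,j)\in E_0(P)} q^1_{ij}$, which lets me absorb the constant $\sum_{(i,j)\in E} q^1_{ij}$ and shift the coefficient on the $E_0(P)$ term. This suggests defining $c_i' = c_i$ and $q^{0'}_{ij} = q^0_{ij} - q^1_{ij}$ for all $(i,j)\in E$, so that the ISOP objective $g(P) = \sum_{i\in P} c_i' + \sum_{(i,j)\in E_0(P)} q^{0'}_{ij}$ differs from $f(P)$ by exactly the constant $\sum_{(i,j)\in E} q^1_{ij}$.

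The plan for the forward direction is therefore a short computation: expand $f(P) = \sum_{i\in P} c_i + \sum_{(i,j)\in E_0(P)} q^0_{ij} + \sum_{(i,j)\in E_1(P)} q^1_{ij}$, substitute the identity $\sum_{E_1(P)} q^1_{ij} = \sum_{E} q^1_{ij} - \sum_{E_0(P)} q^1_{ij}$, and regroup to obtain $f(P) = g(P) + \sum_{(i,j)\in E} q^1_{ij}$, a constant offset. Because the offset is independent of the chosen independent set $P$, an optimal solution of this ISOP instance is an optimal solution of the original ISPNEW (and note maximization is preserved, since these are stated as maximization problems). The reverse inclusion is immediate: ISOP is the special case $q^1_{ij}=0$ of ISPNEW, so any ISOP instance is already an ISPNEW instance, which establishes the equivalence of ISPNEW and ISOP.

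For the equivalence of ISPNEW and ISUP (the case $q^0_{ij}=0$), I would proceed symmetrically, now absorbing the $q^0$ term instead. Again using that on an independent set $E = E_0(P) \sqcup E_1(P)$, I can write $\sum_{E_0(P)} q^0_{ij} = \sum_E q^0_{ij} - \sum_{E_1(P)} q^0_{ij}$, and then set $\bar{c}_i = c_i$ together with $\bar{q}^1_{ij} = q^1_{ij} - q^0_{ij}$, giving an ISUP instance whose objective again differs from $f(P)$ by the constant $\sum_{(i,j)\in E} q^0_{ij}$. The converse direction once more follows because ISUP is literally the $q^0_{ij}=0$ special case of ISPNEW.

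The step I expect to carry the real (if modest) weight is the bookkeeping identity $E = E_0(P) \sqcup E_1(P)$ for independent sets, which is the precise structural feature that makes these reductions work and which replaces the vertex-cover identity $E = E_1(P) \sqcup E_2(P)$ used in Theorem \ref{vcpnewthm1}; everything else is routine constant-shifting. No genuine obstacle is anticipated, since the whole argument is the transcription of an already-verified proof across the cover/independent-set duality, but I would be careful to state explicitly that the feasible region (independent sets) is identical across all three problems so that the constant offset legitimately transfers optimality, and to keep the maximization sense consistent throughout.
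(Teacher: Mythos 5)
Your proposal is correct and follows essentially the same route as the paper: the identical substitutions ($c_i' = c_i$, $q_{ij}^{0'} = q_{ij}^0 - q_{ij}^1$ for ISOP, and $\bar{c}_i = c_i$, $\bar{q}_{ij}^{1} = q_{ij}^1 - q_{ij}^0$ for ISUP), the same constant-offset computation using the fact that an independent set splits $E$ into $E_0(P)$ and $E_1(P)$, and the same special-case argument for the converse directions. No gaps; your explicit remark that the feasible region is the same across all three problems is a point the paper leaves implicit.
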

\begin{proof}  We first show that ISPNEW can be formulated as ISOP. Given an instance of ISPNEW, define

\begin{align*}
    c_i^{'} = & c_i,~ \forall ~ i \in V  \text{ and }\\
     q_{ij}^{0^{'}}  =& q_{ij}^0-q_{ij}^1,~ (i,j) \in E.
\end{align*}

\noindent  Now, consider the instance of ISOP where $c_i$ is replaced by  $c_i^{'}$ and  $q_{ij}^{0}$ is replaced by $q_{ij}^{0^{'}}$ . Then, for any  independent set $P \subseteq V$, the objective function of this ISOP  can be written as

 \begin{equation*}
  g(P)= \sum_{ i\in P} c_i^{'}  + \sum_{(i,j) \in E_0(P)}q_{ij}^{0^{'}} =   \sum_{ i\in P} c_i  + \sum_{(i,j) \in E_0(P)} ( q_{ij}^0-q_{ij}^1).
 \end{equation*}

The objective function $f(P)$ of ISPNEW is given by

\begin{align*}
f(P) =& \sum_{i \in P}  c_i + \sum_{(i,j) \in E_0(P)}  q_{ij}^2 + \sum_{ (i,j) \in E_1(P)}q_{ij}^1,\\
     =& \sum_{i \in P}  c_i + \sum_{(i,j) \in E_0(P)}  q_{ij}^0 + \sum_{ (i,j) \in E}q_{ij}^1 -\sum_{ (i,j) \in E_0(P)}q_{ij}^1,\\
     =& \sum_{i \in P}  c_i + \sum_{(i,j) \in E_0(P)}  (q_{ij}^0  - q_{ij}^1 ) + \sum_{ (i,j) \in E}q_{ij}^1,\\
    = & g(P) + \sum_{(i,j) \in E} q_{ij}^1.
\end{align*}

\noindent Thus, for any independent $P \subseteq V$ of ISPNEW and the instance of ISOP constructed above, $f(P) - g(P) $ is a constant. Therefore, an optimal solution to this ISPNEW  will also be  an optimal solution to ISOP.\\

Since ISOP is a special case of ISPNEW, any ISOP can be formulated as ISPNEW, establishing equivalence between ISPNEW and ISOP.\\

To establish the equivalence between ISPNEW and ISUP, define
\begin{align*}
    c_i^{''} = & c_i,~ \forall ~ i \in V  \text{ and } \\
     q_{ij}^{1^{''}}  =& q_{ij}^1- q_{ij}^0,~ (i,j)  \in E.
\end{align*}

\noindent Consider the instance of  ISUP where $c_i$ is replaced by  $c_i^{''}$ and  $q_{ij}^{1}$ is replaced by $q_{ij}^{1^{''}}$. As in the previous case, we can show that an optimal solution to this ISUP provides an optimal solution to ISPNEW and the converse follows the fact that ISUP is a special case of ISPNEW.\\
\end{proof}

\begin{thm} \label{gen7} ISPNEW, ISOP, and ISUP are equivalent to the maximum weight independent set problem (MWISP).
\end{thm}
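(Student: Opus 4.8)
The plan is to mirror the structure of Theorems \ref{vcpnewthm1} and \ref{gen6}, exploiting the fact that by Theorem \ref{ispnewthm1} the three problems ISPNEW, ISOP, and ISUP are already equivalent. Hence it suffices to show a direct equivalence between any one of them---say ISPNEW---and MWISP, and the full chain of equivalences then follows. First I would reduce ISPNEW to MWISP by absorbing the edge weights into modified node weights. Since feasible solutions of ISPNEW are independent sets $P$, every edge $(i,j)\in E$ satisfies $(i,j)\in E_0(P)\cup E_1(P)$, never $E_2(P)$, so the objective contains no genuinely quadratic term over feasible solutions; this is precisely what makes the reduction to a purely node-weighted problem possible.

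The key step is the bookkeeping identity expressing $f(P)$ for an independent set $P$ in terms of sums over $P$ alone. Because $P$ is independent, for an edge $(i,j)$ with $i\in P$ the other endpoint $j\notin P$, so each node $i\in P$ contributes $q_{ij}^1$ for every incident edge, while an edge with both endpoints outside $P$ contributes $q_{ij}^0$. Writing $E_1(P)$ and $E_0(P)$ in terms of indicator sums over $P$ (analogous to equations (\ref{VCPNEWVC1}) and (\ref{VCPNEWVC2})), one obtains
\begin{equation*}
f(P) = \sum_{i\in P}\Bigl(c_i + \sum_{(i,j)\in E}\bigl(q_{ij}^1 - q_{ij}^0\bigr)\Bigr) + \sum_{(i,j)\in E} q_{ij}^0 = \psi(P) + \sum_{(i,j)\in E} q_{ij}^0,
\end{equation*}
where $\psi(P)$ is the objective of MWISP with node weights $w_i = c_i + \sum_{(i,j)\in E}(q_{ij}^1 - q_{ij}^0)$. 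Since the trailing sum is a constant independent of $P$, maximizing $f$ over independent sets is equivalent to maximizing $\psi$, so an optimal solution to this MWISP instance solves ISPNEW. The converse direction is immediate: given an MWISP instance with weights $\vec{w}$, set $c_i = w_i$ and $q_{ij}^0 = q_{ij}^1 = 0$ for all edges, and the resulting ISPNEW reduces exactly to MWISP. This establishes the equivalence of ISPNEW and MWISP, and combining with Theorem \ref{ispnewthm1} gives the stated result.

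The main obstacle I anticipate is purely one of careful accounting with signs and double-counting: because ISPNEW is naturally a \emph{maximization} problem (unlike the minimization convention used for VCPNEW in Theorem \ref{gen6}), I must make sure the direction of the reduction is preserved and that the edge-sum rewriting correctly handles the boundary between $E_0(P)$ and $E_1(P)$. In particular, the identity $\sum_{(i,j)\in E_1(P)} q_{ij}^1 = \sum_{i\in P}\sum_{(i,j)\in E} q_{ij}^1$ relies crucially on $P$ being independent (so no edge lies in $E_2(P)$ and no edge is counted twice), and I would flag this as the one place where the argument differs structurally from the vertex-cover case, where each edge had at least one endpoint in the cover rather than at most one.
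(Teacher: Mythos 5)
Your proposal is correct and follows essentially the same route as the paper's proof: the same node-weight transformation $w_i = c_i + \sum_{(i,j)\in E}(q_{ij}^1 - q_{ij}^0)$, the same constant shift $\sum_{(i,j)\in E} q_{ij}^0$ arising from rewriting $E_0(P)$ and $E_1(P)$ as sums over $P$ (valid precisely because $P$ is independent), and the same trivial converse reduction with $q_{ij}^0 = q_{ij}^1 = 0$, combined with Theorem \ref{ispnewthm1}. Your explicit flagging of where independence of $P$ is needed is a point the paper makes only implicitly, but the argument is the same.
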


\begin{proof} We first show that  ISPNEW can be formulated as MWISP.  For any independent set $P \subseteq \{1,2,\hdots,n\}$, let $ f(P)$, $ g(P)$, $ h(P)$ and $\phi(P)$  be the objective function values of  ISPNEW, ISOP, ISUP, and MWISP respectively. \\

Define

\begin{equation*}
w_i = c_i + \sum_{(i,j) \in E}  (q_{ij}^1 -q_{ij}^0)
\end{equation*}
 and consider the MWISP with node weight $w_i, ~ \forall i \in V$.\\

For any independent $ P \subseteq  \{1,2\hdots,n\}$ of ISPNEW

\begin{equation}
\label{ISPNEWVC}   f(P) =  \sum_{i \in P} c_i   + \sum_{ (i,j) \in E_0(P)} q^{0}_{ij} + \sum_{ (i,j) \in E_1(P) } q^{1}_{ij}. \\
\end{equation}

Since $P$ is an independent set, it can be verified that
\begin{eqnarray}
\label{ISPNEWVC1}  \sum_{ (i,j) \in E_0(P)} q^{0}_{ij} &=&  \sum_{ (i,j) \in E} q^{0}_{ij} - \sum_{ (i,j) \in E, i\in P} q^{0}_{ij} - \sum_{ (i,j) \in E, j\in U} q^{0}_{ij}   \text{ and }\\
 \label{ISPNEWVC2}  \sum_{ (i,j) \in E_1(P) } q^{1}_{ij} & =&    \sum_{ (i,j) \in E, i\in P} q^{1}_{ij} + \sum_{ (i,j) \in E, j\in P} q^{1}_{ij}.
\end{eqnarray}

From (\ref{ISPNEWVC}), (\ref{ISPNEWVC1}) and (\ref{ISPNEWVC2}), we have

\begin{align}
\nonumber f(P) =&  \sum_{i \in P} c_i +  \sum_{ (i,j) \in E} q^{0}_{ij} -   \sum_{ (i,j) \in E, i\in P} q^{0}_{ij} - \sum_{ (i,j) \in E, j\in P} q^{0}_{ij} +  \sum_{ (i,j) \in E, i\in P} q^{1}_{ij} + \sum_{ (i,j) \in E, j\in P} q^{1}_{ij} , \\
 \nonumber = & \sum_{i \in P} \left( c_i + \sum_{(i,j) \in E}  (q_{ij}^1 -q_{ij}^0)  \right)  + \sum_{(i,j) \in E}  q_{ij}^0  \\
  \label{ISPNEWVC1.1}  =&  \phi(P) + \sum_{(i,j) \in E}  q_{ij}^0 .
\end{align}

Thus an optimal solution to the MWISP constructed above is also an optimal solution to ISPNEW.\\

MWISP can be formulated as each of these problems. From Theorem \ref{ispnewthm1} ISPNEW, ISOP, and ISUP are equivalent to each other, it is enough to show that ISPNEW can be formulated as MWISP.  Let $\vec{w}=(w_1,w_2,\hdots,w_n)$ be an instance of MWISP. Choose $c_i=w_i,~ \forall i \in V$, and $q^0_{ij} = q_{ij}^1 = 0,~ $ for all  $(i,j) \in E$. Then, an optimal solution to the ISPNEW on $\vec{c}, \vec{q^0}, \text{ and } \vec{q^1}$ solves MWISP. Thus ISPNEW and MWISP are equivalent.
\end{proof}

It may be noted that the equivalence established in Theorem \ref{gen7} is only upto optimality. The reduction provided in the proof do not preserve $\epsilon-$optimality. \\

Let us consider integer programming formulations of ISPNEW, ISOP, and ISUP. Consider the binary variables,

  \begin{equation*}
x_i =
\begin{cases}
1 &\text{if node $i$ is in the independent set,}\\
0 &\text{otherwise.}
\end{cases}
\end{equation*}
 and for every edge $(i,j) \in E$
\begin{equation*}
z_{ij} =
\begin{cases}
1 &\text{if node $i$ and $j$ none of them in the independent set,}\\
0 &\text{otherwise.}
\end{cases}
\end{equation*}
and
\begin{equation*}
r_{ij} =
\begin{cases}
1 &\text{if node $i$ or $j$  is in the independent set,}\\
0 &\text{otherwise.}
\end{cases}
\end{equation*}

Then ISPNEW, ISOP, and ISUP can be formulated as:

\begin{align}\nonumber  \text{ ISPNEW-IP: ~ Maximize}    ~  & \sum_{i=1}^{n}c_ix_i+\sum_{(i,j) \in E} q_{ij}^0 z_{ij}+ \sum_{(i,j) \in E} q_{ij}^1 r_{ij}, \\
\label{is3} \text{ Subject to: ~~} & x_i + x_j + z_{ij} = 1, ~\forall (i,j) \in E,\\
\label{is4} & z_{ij} + r_{ij} = 1, ~\forall (i,j) \in E,\\
\nonumber & x_i \in \{0,1 \}, ~\forall i \in V, \\
\nonumber & z_{ij},r_{ij} \in \{0,1\},  ~ \forall (i,j) \in E.
\end{align}

\begin{align}\nonumber \mbox{ ISOP-IP: ~ Maximize}    ~   & \sum_{i=1}^{n}c_ix_i+\sum_{(i,j) \in E} q_{ij}^0 z_{ij} \\
\nonumber \text{ Subject to: ~~} & x_i + x_j +z_{ij} = 1, \mbox{ for all } (i,j) \in E,\\
\nonumber & x_i \in \{0,1 \}, ~\forall i \in V, \\
\nonumber & z_{ij} \in \{0,1\},  ~ \forall (i,j) \in E.
\end{align}

\begin{align}\nonumber \mbox{ ISUP-IP: ~ Maximize}    ~   & \sum_{i=1}^{n}c_ix_i+\sum_{(i,j) \in E} q_{ij}^1 r_{ij} \\
\nonumber \text{ Subject to: ~~} & x_i + x_j =r_{ij}, \mbox{ for all } (i,j) \in E,\\
\nonumber & x_i \in \{0,1 \}, ~\forall i \in V, \\
\nonumber & r_{ij} \in \{0,1\},  ~ \forall (i,j) \in E.
\end{align}

The LP relaxation of ISPNEW-IP, ISOP-IP, and ISUP-IP are denoted by  ISPNEW-LP, ISOP-LP, and ISUP-LP respectively. ISPNEW, ISOP, and ISUP are in IP2, therefore, all  extreme point  ISPNEW-IP, ISOP-IP, and ISUP-IP  are half-integral.\\

We can write  ISPNEW as the minimization problem with some modification as follows:

\begin{align}\nonumber \text{ ISPNEW: ~ } \min   ~  & \sum_{i=1}^{n}c_i(1-x_i)-\sum_{(i,j) \in E} q_{ij}^0 y_{ij}- \sum_{(i,j) \in E} q_{ij}^1 (1-y_{ij}) -\sum_{i=1}^{n}c_i \\
\nonumber \text{ st ~~} & (1-x_i) + (1-x_j) - y_{ij} = 1, ~\forall (i,j) \in E,\\
\nonumber & (1-x_i) \in \{0,1 \}, ~\forall i \in V, \\
\nonumber & y_{ij} \in \{0,1\},  ~ \forall (i,j) \in E.
\end{align}

This structure of ISPNEW shows that if $(\vec{x},\vec{y})$ is a feasible solution for ISPNEW then $(\vec{1-x},\vec{y})$ is a feasible solution for VCPNEW.\\

and if $(\vec{x},\vec{y})$  is an optimal solution of  ISPEW with objective $\sum_{i=1}^{n}c_ix_i+\sum_{(i,j) \in E} q_{ij}^0 y_{ij}+ \sum_{(i,j) \in E} q_{ij}^1 (1-y_{ij})$, then  $(\vec{1-x},\vec{y})$  is an optimal solution of the corresponding VCPEW with objective   $\sum_{i=1}^{n}c_ix_i-\sum_{(i,j) \in E} q_{ij}^0 y_{ij}- \sum_{(i,j) \in E} q_{ij}^1 (1-y_{ij})$.

\section{Conclusion}

We have discussed the GVC  which  generalizes GVC1, GVC2, MWVCP, and MWISP and also showed that GVC, GVC1, GVC2 are  equivalent to the UBQP. Some solvable cases are identified and approximation algorithms are suggested for some special cases. We also studied GVC on bipartite graphs and identified some polynomial solvable cases. We showed the equivalence of GVC on bipartite graphs and  the BQP01. We presented integer programming formulations of GVC and related problems. We discussed VCPNEW and ISPNEW which are special cases of GVC. We also showed that the VCPNEW is equivalent to MWVCP and the ISPNEW is equivalent to MWISP. Investigation of this problem not only helped  us to understand the nature of this problem but also helped us to identify many polynomial time solvable case using the equivalence with UBQP.

\end{document}